\documentclass[11pt]{article}
\usepackage{fullpage}
\usepackage{subfig}
\usepackage[usenames,dvipsnames]{xcolor}
\usepackage[colorlinks,citecolor=blue,linkcolor=BrickRed]{hyperref}
%MAC
	\usepackage{latexsym,graphicx,epsfig,color}
%UNIX \usepackage{latexsym,graphicx,epsfig,color}
\usepackage{amsfonts,amssymb,amsmath,amsthm,amstext}
\usepackage{libertine}
\usepackage[libertine]{newtxmath}
\usepackage{enumitem}
\setitemize{itemsep=2pt,topsep=0pt,parsep=0pt}
\usepackage{url,setspace}
\usepackage{multirow}
\usepackage{rotating}
\usepackage{makeidx}
\usepackage{tikz}
\usepackage{accents}
\usepackage{xspace}
\usepackage{algorithm,algpseudocode}
\usepackage{bm}
\usepackage{changepage} 	%\begin{adjustwidth}{0.5cm}{} \end{adjustwidth}
\usepackage{thmtools,thm-restate}

\newcommand{\IGNORE}[1]{}
\allowdisplaybreaks

\usetikzlibrary{decorations.markings}
\usetikzlibrary{arrows}
\tikzstyle{block}=[draw opacity=0.7,line width=1.4cm]
\tikzstyle{graphnode}=[circle, draw, fill=black!20, inner sep=0pt, minimum width=6pt]
\tikzstyle{point}=[circle, draw, fill=black!30, inner sep=0pt, minimum width=1pt]
\tikzstyle{input}=[rectangle, draw, fill=black!75,inner sep=3pt, inner ysep=3pt, minimum width=4pt]
\tikzstyle{unmatched}=[graphnode,fill=black!0]
\tikzstyle{shaded}=[graphnode,fill=black!20]
\tikzstyle{matched}=[graphnode,fill=black!100]  	
\tikzstyle{matching} = [ultra thick]
\tikzset{
    %Define standard arrow tip
    >=stealth',
    % Define arrow style
    pil/.style={
           ->,
           thick,
           shorten <=2pt,
           shorten >=2pt,}
}
\tikzset{->-/.style={decoration={
  markings,
  mark=at position .5 with {\arrow{>}}},postaction={decorate}}}
%} %MAC

\makeindex

%%%%%%%%%%%%%%%%%%%%%%
%%% Spacing
\setlength{\parskip}{0.10cm}       % space between paragraphs
%\setlength{\parindent}{.5cm}        % amount of indention
%%%%%%%%%%%%%%%%%%%%%%%%%%%%
% Theorems & Definitions
\newtheorem{theorem}{Theorem}[section]
\newtheorem{claim}[theorem]{Claim}

\newtheorem{lemma}[theorem]{Lemma}
\newtheorem{corollary}[theorem]{Corollary}

\theoremstyle{definition}

\newtheorem{defn}[theorem]{Definition}

\usepackage{thmtools,thm-restate} % Used to restate lemmas without new introducing new numbers. 

%MAC \newcommand{\IGNORE}[1]{}
%%%%%%%%%%%%%%%%%%%%%%%%%%%%%%%%%
% General Macros

%-------------------------
%Complexity Related shortcuts-------

%-------------------------
%-------------------------

%-------------------------
%LP SDP shortcuts-----------------------

%\newcommand{\iIG}{\textsc{IG}}
%\newcommand{\iIG}{{\textrm{integrality gap}}}
%\newcommand{\iIGs}{{\textsc{IG}}s}
%\newcommand{\iIGs}{\textrm{integrality gaps}}
%-------------------------
%-------------------------
%-------------------------
%operators - lift and project shortcuts
\newcommand{\E}{\mathbb{E}}

%-------------------------
%-------------------------
%Combinatorial problems---------

%-------------------------
%-------------------------
%special shortcuts ------------------

%-------------------------
%-------------------------
%various common shortcuts ---------
\newcommand{\calF}{\mathcal{F}}
\newcommand{\calD}{\mathcal{D}}

\def\bx {{\bf x}}

\def \reals {\mathbb{R}}

\def \one {\chi}
%----------------------------------
%----------------------------------

% \def \atspb{$\textsf{ATSP}_{BAL}$}

%----------------------------------
%----------------------------------
% (JC) macros for our SA feasible solutions for ATSP ---------

	% SA vector y^{I,t} == \yvec{I}{t}
% 6feb13 changed \zvec to y^{}
	% SA vector z^{I,t} == \zvec{I}{t}

		% SA vector 1^{E',t}
	% F^{I}(S) (formerly: R_S(A))
	% f^{I}(S) == |F^{I}(S)|
 % index of cycle C_j containing edge e
	% index in \fracset ... tour(\indfrac)

% \newcommand{\homog}[1]{\overline{#1}}

\usepackage{quoting}
\quotingsetup{vskip=0pt}

\newcounter{note}[section]

\newcommand{\x}{\ensuremath{\textbf{x}}}
\newcommand{\Rx}{\ensuremath{ R(\textbf{x}) }}

\newcommand{\univ}{{N}}
\newcommand{\val}{v}
\newcommand{\bval}{\textbf{\val}}
\newcommand{\poly}{\mathrm{poly}}
\newcommand{\M}{\mathcal{M}}
\newcommand{\calm}{\mathcal{M}}
\newcommand{\hcalD}{\hat{\calD}}
\newcommand{\hbval}{\hat{\bval}}
\newcommand{\baseprice}{b}
\newcommand{\arrivaltime}{{T}}
\newcommand{\barrivaltime}{\textbf{T}}
\newcommand{\growingmid}{\mathrel{}\middle|\mathrel{}}
\newcommand{\T}{\ensuremath{\mathcal{T}}\xspace}
\newcommand{\hval}{\hat{\val}}
\newcommand{\Opt}{\mathsf{Opt}}
\newcommand{\Alg}{\mathsf{Alg}}
\newcommand{\Rev}{\mathsf{Revenue}}
\newcommand{\Util}{\mathsf{Utility}}
\newcommand{\Span}{\mathsf{Span}}

%----------------------------------

\title{Optimal Online Contention Resolution Schemes\\ via Ex-Ante Prophet Inequalities}

%\titlerunning{}%optional, please use if title is longer than one line

	\author{ Euiwoong Lee\thanks{
        (euiwoonl@cs.cmu.edu)
        Courant Institute of Mathematical Sciences,
        New York University.
        }
	\and Sahil Singla\thanks{
        (ssingla@cmu.edu)
        Computer Science Department,
        Carnegie Mellon University.
        }
}

\date{ \today}

\begin{document}
\maketitle

\begin{abstract}
Online contention resolution schemes (OCRSs) were proposed by Feldman, Svensson, and Zenklusen~\cite{FSZ-SODA16} as a generic technique to round a fractional solution in the matroid polytope in an online fashion. It has found  applications  in  several stochastic combinatorial  problems where there is a \emph{commitment} constraint: on seeing the value of a stochastic element, the algorithm has to immediately and irrevocably decide whether to select it while always maintaining an independent set in the matroid. Although OCRSs immediately lead to prophet inequalities, these prophet inequalities are not optimal. Can we instead  use prophet inequalities to design optimal OCRSs?

We design the first optimal $1/2$-OCRS for matroids by reducing the problem to designing a matroid prophet inequality where we compare  to the stronger benchmark of an ex-ante relaxation. We also introduce and design optimal $(1-1/e)$-random order CRSs for matroids, which are similar to OCRSs but   the arrival is chosen uniformly at random.

 \end{abstract}

%\tableofcontents

%%%%%%%%%%%%%%%%%%%%%%%%%%%%%%

\section{Introduction}

Given a combinatorial optimization problem, a common algorithmic approach  is to first solve a convex relaxation of the problem  and to then round the obtained {fractional} solution $\bx$ into a \emph{feasible integral} solution while (approximately) preserving the objective. Contention resolution schemes (CRSs), introduced in~\cite{CVZ-SICOMP14}, is a way to perform this rounding given a fractional solution $\x \in \reals_{\geq 0}^n$. For $c>0$, intuitively a $c$-CRS is a rounding algorithm that guarantees every element $i$  is selected into the final feasible  solution w.p. at least $c \cdot x_i$. For a \emph{maximization} problem with a linear objective, by linearity of expectation such a $c$-CRS directly implies a $c$-approximation algorithm.

In a recent work, Feldman et al.~\cite{FSZ-SODA16} introduced an Online CRS (OCRS), which is a CRS with an additional property that it performs the rounding in an ``online fashion''. This property is crucial for the \emph{prophet inequality} problem  (or any stochastic combinatorial problem with a \emph{commitment} constraint; see~\S\ref{sec:related}).

\begin{defn}[Prophet inequality] Suppose each element $i \in \univ$ takes a value $\val_i \in \reals_{\geq 0}$ independently from some known distribution $\calD_i$. These values are presented  one-by-one  to an online algorithm in an adversarial order. 
Given a packing feasibility constraint $\calF \subseteq 2^{\univ}$, the problem is to  {immediately} and {irrevocably} decide whether to select the next element $i$, while always maintaining a feasible solution and maximizing the sum of the selected values.
\end{defn}
 A $c$-\emph{approximation prophet inequality} for $0\leq c\leq 1$ means there exists an online algorithm with expected value at least $c$ times the expected value of an offline algorithm that knows all values from the beginning. %the largest feasible offline solution.
As shown in~\cite{FSZ-SODA16}, a $c$-OCRS   immediately  implies a $c$-approximation prophet inequality. Some other applications are  oblivious posted pricing mechanisms and stochastic probing.

Although powerful, the above approach of using OCRSs to design prophet inequalities  does not  give us \emph{optimal} prophet inequalities. For example, while we know a $1/2$-approximation prophet inequality over matroids~\cite{KW-STOC12}, we only know a $1/4$-OCRS over matroids~\cite{FSZ-SODA16}. This indicates that the currently known OCRSs may not be optimal.
Can we design better OCRSs?
The main contribution of this work is to design an \emph{optimal} OCRS over matroid constraints using the following idea:
\begin{quoting}
\emph{Not only can we design prophet inequalities from OCRSs, we can also design OCRSs from prophet inequalities.}
\end{quoting}
More specifically, our OCRS is based on  an \emph{ex-ante} prophet inequality: we compare the online algorithm to the  stronger benchmark of a convex relaxation. We   modify existing prophet inequalities
 to obtain  ex-ante prophet inequalities while \emph{preserving} the {approximation} factors. 
As a corollary, this gives  the first {optimal} $1/2$-OCRS over matroids.% for both adversarial and random arrival orders of elements. 

Since for many applications  the arrival order is not chosen by an adversary, some recent works have also studied  prophet secretary inequalities where the  arrival order is chosen uniformly at random~\cite{EHLM-SIDMA17,EHKS-SODA18,ACK-ArXiv17}. Motivated by these works,  we  introduce  \emph{random order contention resolution schemes} (RCRS), which is an OCRS for uniformly random arrival\footnote{A parallel independent work has also introduced RCRS~\cite{AW-ArXiv18}; however,  their technical results are very different.}. Again by designing the corresponding random order ex-ante prophet  inequalities, we obtain optimal $(1-1/e)$-RCRS over matroids.

In \S\ref{sec:modelResults} we formally define  an OCRS/RCRS and an ex-ante prophet inequality. In \S\ref{sec:resultsAndTechniques} we describe our results  and proof techniques.
See  \S\ref{sec:related} for further related work.

%------------------------------------------------------------------------------------------
\subsection{Model} \label{sec:modelResults}
%\snote{Move (or delete) some of the stuff from this section to previous one.}

CRSs are a powerful tool for  offline and stochastic optimization problems~\cite{CVZ-SICOMP14,GN-IPCO13}. For a given $\x \in [0,1]^\univ$, let $\Rx$ denote a random set containing each element $i\in \univ$   independently w.p. $x_i$. We say an element $i $ is \emph{active} if it belongs to $\Rx$.
\begin{defn}[Contention resolution scheme]
Given a finite ground set $\univ$ with $n = |\univ|$ and a packing ({downward-closed}) family of feasible  subsets $\calF \subseteq 2^\univ$, let $P_{\calF} \subseteq [0, 1]^\univ$ be the convex hull of all characteristic vectors of feasible sets. For a given $\x \in P_{\calF}$,  a \emph{$c$-selectable} CRS (or simply,  $c$-CRS) 
%{\bf (selectable? balanced?)}
is a (randomized) mapping $\pi : 2^\univ \to 2^\univ$ satisfying the following three properties: 
\begin{enumerate}[itemsep=0ex,label=(\roman*),topsep=0pt,parsep=0pt]
\item $\pi(S) \subseteq S$ for all $S \subseteq \univ$. 	\label{eq:CRSproperty1}
\item $\pi(S) \in \calF$ for all $S \subseteq \univ$.	\label{eq:CRSproperty2}
\item $\Pr_{\Rx, \pi} [i \in \pi(\Rx)] \geq c \cdot x_i$ for all $i \in \univ$. \label{eq:CRSproperty3}
\end{enumerate}
\end{defn}
Notice, if $f$ is a monotone linear function then $\E[f(\pi(\Rx))] \geq c \cdot \E[f(\Rx)]$. 
%The quantity $\E[f(R)]$ is known as the value of the {\em multilinear relaxation} of the respective optimization problem. 
By constructing CRSs for various constraint families of $\calF$, Chekuri et al.~\cite{CVZ-SICOMP14} give improved approximation algorithms for linear and submodular maximization problems under knapsack, matroid, matchoid constraints, and their intersections\footnote{Some  ``greedy'' properties are also required from the CRS for the guarantees to  hold for a submodular function $f$~\cite{CVZ-SICOMP14}.}. 

In the above applications to offline optimization problems, the algorithm first flips all the random coins to sample $\Rx$, and then  obtains $\pi(\Rx) \subseteq \Rx$. 
For various online problems such as the prophet inequality,  
%and stochastic optimization problems where 
this randomness is an inherent part of the problem.
 Feldman et al.~\cite{FSZ-SODA16} therefore introduce an OCRS  where the random set $\Rx$ is sampled in the same manner, but whether $i \in \Rx$ (or not) is only revealed one-by-one to the algorithm in an adversarial order\footnote{For adversarial arrival order, we assume that this order   is known to the OCRS algorithm in advance. This offline  adversary is weaker than the almighty adversary  considered in~\cite{FSZ-SODA16}, but is common in the prophet inequality literature~\cite{Rubinstein-STOC16,RS-SODA17}. We need this assumption in \S\ref{sec:ExAnteToOCRS} to  define our exponential sized linear program.}. 
 After each revelation (arrival), the OCRS has to irrevocably decide whether to include $i \in \Rx$  into $\pi(\Rx)$ (if possible). 
 A $c$-selectable OCRS (or simply, $c$-OCRS) is an OCRS   satisfying the above properties~\ref{eq:CRSproperty1} to \ref{eq:CRSproperty3} of a $c$-CRS. 

In this work, we also study RCRS which is an OCRS with the arrival order chosen  uniformly at random.   A $c$-selectable RCRS (or simply, $c$-RCRS) is an RCRS   satisfying the above properties~\ref{eq:CRSproperty1} to \ref{eq:CRSproperty3} of a $c$-CRS, where in Property~\ref{eq:CRSproperty3} we also take expectation over the arrival order.

%When the family $\calF$ is formed by a matroid, the optimal prophet inequality holds with $c = 1/2$~\cite{KW-STOC12}, while the best OCRS only ensures $c = 1/4$~\cite{FSZ-SODA16}. 
While prophet inequalities have been designed using OCRSs, our main result in this paper is to  show a deeper {\em reverse} connection between OCRSs and prophet inequalities.
We first define an ex-ante prophet inequality. Given a prophet inequality problem instance with  packing constraints $\calF$ and r.v.s $\val_i \sim \calD_i$ for $i\in \univ$,   the following \emph{ex-ante relaxation}  gives an upper bound on the expected offline optimum:
\begin{align}\label{eq:exanteUpper}
	\max_{\x} ~\sum_{i} x_i \cdot \E_{\val_i\sim \calD_i}[\val_i \mid \text{$\val_i$ takes value in its top $x_i$ quantile}] \quad \text{s.t.} \quad \x \in P_{\calF}.
\end{align}
To prove that \eqref{eq:exanteUpper} is an upper bound,  we interpret   $x_i$ as the probability that $i$ is in the offline optimum. It is also known that \eqref{eq:exanteUpper}  is a convex program and can be solved efficiently; see~\cite{FSZ-SODA16} for more details.
\begin{defn}[Ex-ante prophet inequality] For $0\leq c\leq 1$, a $c$-approximation ex-ante prophet inequality for packing constraints $\calF$ is a prophet inequality algorithm with expected value at least $c$ times \eqref{eq:exanteUpper}.
\end{defn}

Before describing our results, to build some intuition for the above definitions we  discuss the special case of a rank~$1$ matroid, i.e., where we can only select   one of the $n$ elements.

%\vspace{0.3cm}
\paragraph*{Example: Rank~$1$ matroid}
%Consider the special case of a rank~$1$ matroid, i.e., a $1$-uniform matroid where we can only select   one of the $n$ elements.
%
For simplicity, in this section we assume that all random variables are Bernoulli, i.e., $\val_i$ takes value $y_i$ independently w.p. $p_i$, and is $0$ otherwise. We first show why a $c$-OCRS implies a $c$-approximation prophet inequality for rank~$1$ matroids.

Consider the optimum solution $\x$  to the ex-ante relaxation~\eqref{eq:exanteUpper} for the above Bernoulli instance.  Its objective value is $\sum_i x_i y_i$ where $\x$   satisfies   $\sum_i x_i \leq 1$. Moreover,  $x_i \leq p_i$ for all $i$ because selecting $i$ beyond $p_i$ does not increase~\eqref{eq:exanteUpper}. To see why  \eqref{eq:exanteUpper} gives an upper bound on the expected offline  maximum, observe that if we interpret $x_i$ as the probability that $\val_i$ is the offline maximum, this  gives a feasible solution to $\sum_i x_i \leq 1$ and with value at most  $\sum_i x_i y_i$.
Thus, to prove a $c$-approximation prophet inequality, it suffices to  design an online algorithm with value at least $c \cdot \sum_i x_i y_i$. 
Consider an algorithm that runs a $c$-OCRS on $\x$, where $i$ is considered active independently w.p. $x_i/p_i$ whenever $\val_i$ takes value $y_i$. This ensures element $i$ is active w.p. exactly $x_i$. Since a $c$-OCRS guarantees each element is selected w.p. $\geq c$ when it is active,  by linearity of expectation such an algorithm has expected value at least $ c \cdot \sum_i x_i y_i$. %, which implies a $c$-approximation prophet inequality.

We now discuss a simple $1/4$-OCRS for a rank~$1$  matroid. 
%Given $\x$ satisfying $\sum_i x_i \leq 1$, our goal for OCRS is to design an online algorithm that guarantees each element $i$ is selected w.p. at least $c$ when it is active (i.e., $i \in \Rx$, which happens w.p. $x_i$). 
Given $\x$ satisfying $\sum_i x_i \leq 1$, consider an algorithm that  ignores each element $i$ independently w.p. $1/2$, and otherwise  selects $i$ only if it is active. Since this algorithm selects any element $i$ w.p. at most $x_i/2$ (when $i$ is not ignored and is active), by Markov's inequality the algorithm selects no element till the end w.p. at least $1 - \sum_i x_i/2 \geq 1/2$. Hence the algorithm reaches  each element $i$ w.p. at least $1/2$ without selecting any of the previous elements. Moreover,  it does not ignore $i$ w.p. $1/2$, which implies it considers each element w.p. at least $1/4$. The OCRS due to Feldman et al.~\cite{FSZ-SODA16} can be thought of generalizing this approach to a general matroid.

An interesting result of Alaei~\cite{Alaei-SICOMP14} shows  that the above $1/4$-OCRS can be improved to  a $1/2$-OCRS over a rank $1$ matroid by  ``greedily'' maximizing the probability of ignoring the next element $i$, but considering $i$ w.p. $1/2$ on average.
In \S\ref{sec:OCRSProofsSingleItem} we present Alaei's proof for completeness. In \S\ref{sec:RCRSProofsSingleItem}, we also show how to obtain a simple $(1-1/e)$-RCRS for a rank~$1$ matroid.  This raises the question whether one can obtain a $1/2$-OCRS and a $(1-1/e)$-RCRS for general matroids.

%----------------------------------------------------------------------------------------------------------------------------------------------

\subsection{Results and Techniques} \label{sec:resultsAndTechniques}

Our first theorem gives an approximation factor preserving reduction from OCRSs to  ex-ante prophet inequalities.

\begin{theorem}
For $0\leq c \leq 1$, a $c$-approximation ex-ante prophet inequality for adversarial (random) arrival order over a packing constraint $\calF$ implies  a $c$-OCRS ($c$-RCRS)  over $\calF$.
\label{thm:reduction}
\end{theorem}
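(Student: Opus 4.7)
The plan is to cast the existence of a $c$-OCRS as a feasibility problem over the convex set of all online, feasibility-preserving selection rules, and to rule out infeasibility by invoking LP duality (equivalently, a separating hyperplane argument) together with the assumed ex-ante prophet inequality. I write the argument for the $c$-OCRS statement; the $c$-RCRS statement follows by running the identical argument with a uniformly random arrival order and with the random-order ex-ante prophet inequality in place of the adversarial one.

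First, fix $\x \in P_\calF$ and the adversarial arrival order. Let $\Pi$ be the set of all (randomized) online selection rules $\pi$ that, on every realized active set $S \subseteq \univ$, produce $\pi(S) \subseteq S$ with $\pi(S) \in \calF$. Mixing two rules by a coin flip shows $\Pi$ is convex, and the map $\pi \mapsto \big(\Pr_{R(\x),\pi}[i \in \pi(R(\x))]\big)_{i\in\univ}$ is linear in $\pi$, so its image $\mathcal{C} \subseteq \reals_{\geq 0}^\univ$ is convex. Building a $c$-OCRS is equivalent to showing that some point of $\mathcal{C}$ dominates $c\x$ coordinate-wise, i.e.\ that $c\x \in \mathcal{C} - \reals_{\geq 0}^\univ$.

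Now suppose toward contradiction that no such rule exists, so $c\x$ lies outside the convex, coordinate-wise downward-closed set $\mathcal{C} - \reals_{\geq 0}^\univ$. Standard separation then furnishes nonnegative multipliers $\lambda = (\lambda_i)_{i\in\univ} \geq 0$ with
\begin{equation*}
\sup_{\pi \in \Pi} \, \sum_{i\in\univ} \lambda_i \cdot \Pr\big[i \in \pi(R(\x))\big] \;<\; c \sum_{i\in\univ} \lambda_i\, x_i.
\end{equation*}
Build the prophet instance tailored to $\lambda$: let $\val_i$ take value $\lambda_i$ with probability $x_i$ and $0$ otherwise, coupled so that $\val_i = \lambda_i$ iff $i \in R(\x)$. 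For this instance, each coordinate $i$ contributes $\min(y_i, x_i)\cdot \lambda_i$ to \eqref{eq:exanteUpper}, and because $P_\calF$ is downward-closed the ex-ante optimum equals $\sum_i x_i \lambda_i$, attained at $\textbf{y} = \x$. Applying the hypothesized $c$-approximation ex-ante prophet algorithm $\A$ therefore yields expected value at least $c \sum_i x_i \lambda_i$. Since each $\val_i$ is a deterministic function of activity and the values $\lambda_i$ are known in advance, $\A$'s decisions depend only on which elements of $R(\x)$ have arrived; restricting $\A$'s selections to active elements produces a rule $\pi_\A \in \Pi$ whose expected value equals $\sum_i \lambda_i \Pr[i \in \pi_\A(R(\x))]$. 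This contradicts the displayed strict inequality.

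The main technical effort I anticipate is making the separation step rigorous: the rule space $\Pi$ is exponentially large (one action per history), so representing $\Pi$ (and hence $\mathcal{C}$) as a finite-dimensional convex polytope is precisely where the footnoted assumption that the adversarial order is known to the OCRS in advance is used---it fixes the tree of possible histories and yields the ``exponential sized linear program'' referred to earlier. Once the LP is in hand, the downward-closedness of $\mathcal{C} - \reals_{\geq 0}^\univ$ is what forces the separating hyperplane to have a nonnegative normal, which is essential for interpreting $\lambda$ as valid (nonnegative) values in the prophet instance.
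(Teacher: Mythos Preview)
Your proposal is correct and takes essentially the same approach as the paper: both reduce the existence of a $c$-OCRS to LP duality over the (finite) space of deterministic online rules, interpret the dual variables as Bernoulli values in a prophet instance whose ex-ante relaxation equals $\sum_i \lambda_i x_i$, and invoke the assumed ex-ante prophet inequality to dispatch the dual. The only cosmetic difference is that the paper writes out the primal/dual pair explicitly (following Chekuri--Vondr\'ak--Zenklusen) and then addresses efficient solvability via the ellipsoid method, whereas you package the same duality as a separating-hyperplane/Farkas argument on $\mathcal{C}-\reals_{\geq 0}^\univ$.
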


We complement the above theorem by designing ex-ante prophet inequalities over matroids.

\begin{theorem} \label{thm:ExAnteProphetIneq}
For matroids, there exists a  $1/2$-approximation ex-ante prophet inequality   for  adversarial arrival order and a $(1-1/e)$-approximation ex-ante prophet inequality   for uniformly random arrival order.
\end{theorem}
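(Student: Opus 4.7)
}

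Both parts start from the same reduction: solve~\eqref{eq:exanteUpper} to obtain an optimal fractional solution $\x^* \in P_{\calF}$, let $\tau_i^*$ be the top-$x_i^*$ quantile of $\calD_i$ (breaking ties so that $\Pr[\val_i \geq \tau_i^*] = x_i^*$), and define the capped variables $\tilde{\val}_i := \val_i \cdot \mathbf{1}[\val_i \geq \tau_i^*]$. Then the ex-ante benchmark equals $V^* = \sum_i \E{}{\tilde{\val}_i}$, and the Bernoulli indicators $\mathbf{1}[\tilde{\val}_i > 0]$ are independent with marginals $x_i^*$, forming a point $\x^* \in P_{\calF}$. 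Hence it suffices to design an online algorithm on the capped instance whose expected value is at least $c \cdot V^*$; the algorithm may ignore any element with $\tilde{\val}_i = 0$ at no loss.

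For the adversarial $1/2$-approximation, the plan is to adapt the balanced-threshold scheme of Kleinberg-Weinberg. When element $i$ arrives and $\tilde{\val}_i > 0$, accept $i$ iff $S \cup \{i\}$ is independent in the matroid (where $S$ is the currently accepted set) and $\tilde{\val}_i$ exceeds an adaptive threshold $\gamma_i(S)$, chosen so that the expected gain from accepting $i$ equals the expected ``blocking loss'', i.e.\ the ex-ante value of later elements that would subsequently be spanned by $S \cup \{i\}$. The key modification of KW is that the charging reference is the fractional $\x^*$ (equivalently, any convex combination of independent sets realising $\x^*$), not the realised offline optimum: matroid exchange applied to $\x^*$ and the realised accepted set lets every rejected capped element be charged to an accepted element spanning it, and the balance condition then yields $\E{}{\text{ALG}} \geq V^*/2$.

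For the random-order $(1-1/e)$-approximation, I would combine the capping reduction with the time-varying threshold technique familiar from matroid prophet secretary. Assign each element a uniform arrival time $t_i \in [0,1]$ and, on arrival at time $t_i$, accept $i$ only if $\tilde{\val}_i > 0$, $S \cup \{i\}$ is independent, and $\tilde{\val}_i \geq \gamma_i(t_i, S)$ for a threshold whose strictness decays in $t_i$. The analysis proceeds by: (i) conditioning on element $i$ arriving at time $t$; (ii) using the uniform order, independence of the $\val_j$'s, and $\x^* \in P_{\calF}$ to lower bound the probability that $i$ is not yet spanned by the accepted set; (iii) choosing the threshold schedule so that the product of this survival probability and the acceptance probability, integrated over $t \in [0,1]$, yields $(1-1/e)\,\E{}{\tilde{\val}_i}$ per element. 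Summing over $i$ then gives the claimed bound against $V^*$.

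The main obstacle is step~(ii) of the random-order argument: in the matroid setting, lower-bounding the survival probability of each capped element at its arrival time requires a delicate combination of matroid exchange, independence of values, and the uniform random order, which must interact cleanly with the ex-ante feasibility of $\x^*$. The adversarial $1/2$ case, by contrast, is largely a rewriting of the Kleinberg-Weinberg balanced-threshold proof once the benchmark is replaced by the capped quantity $V^*$; the real technical novelty lies in retaining the stronger ex-ante benchmark under random arrival while still achieving the $(1-1/e)$ factor over an arbitrary matroid.
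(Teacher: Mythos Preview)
Your adversarial plan is essentially the paper's: reduce to a Bernoulli instance, then run Kleinberg--Weinberg with the thresholds computed against a \emph{correlated} reference distribution $\hcalD$ obtained by writing $\x^*$ as a convex combination of independent sets. You correctly single out this change of reference as the crux. One detail worth tightening: the paper's reduction replaces the capped value $\tilde{\val}_i$ by the \emph{constant} $y_i$ with probability $x_i$ (a true Bernoulli), not by $\val_i\cdot\mathbf{1}[\val_i\ge\tau_i^*]$. This matters because the utility step needs $\val_i$ and $\hval_i$ to have the \emph{same} marginal law, so that $\E[(\val_i-\T_i)^+]=\E[(\hval_i-\T_i)^+]$ for the fixed threshold $\T_i$; equal means would not suffice.

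For the random-order part your high-level algorithm (time-varying thresholds against the same correlated reference) matches the paper, but your proposed \emph{analysis} diverges. You frame it as a per-element ``survival $\times$ acceptance'' computation and flag step~(ii)---bounding the probability that $i$ is not yet spanned---as the main obstacle. The paper never attempts that bound. Instead it runs the same Revenue/Utility decomposition as in the adversarial case, tracking a residual function $r(t)=\E[R(A_t,\hbval)]$, and chooses $\alpha(t)=1-e^{t-1}$ so that the differential identity $1-\alpha(t)+\alpha'(t)=0$ collapses the integral to $(1-1/e)\,r(0)$. The per-element work is only the swap $\val_i\to\hval_i$ (valid by the Bernoulli reduction) followed by the observation $\mathbf{1}_{i\notin\Span(A_t)}\ge\mathbf{1}_{i\in\Opt(\hbval\mid A_t)}$; summing the latter indicators recovers $R(A_t,\hbval)$ exactly, so no survival-probability estimate is needed. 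In short, the obstacle you identify is real for your route, and the paper sidesteps it by keeping the EHKS revenue/utility machinery and only changing the benchmark distribution to $\hcalD$.
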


As a corollary, the above two theorems give  optimal OCRS and RCRS over matroids.
This generalizes the rank~$1$ results discussed in the previous section to general matroids; although the proof techniques are very different.
 
\begin{corollary} \label{cor:OCRSMain}
For matroids, there exists a $1/2$-OCRS   and a  $(1-1/e)$-RCRS.
\end{corollary}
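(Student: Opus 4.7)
The corollary is immediate from chaining the two preceding theorems, so the proof proposal is essentially a bookkeeping plan rather than a new argument. My plan is to apply Theorem~\ref{thm:ExAnteProphetIneq} to produce the required ex-ante prophet inequalities over matroids, and then feed each of these into the reduction of Theorem~\ref{thm:reduction} to convert the prophet guarantee into a contention-resolution guarantee at the same approximation ratio.

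In more detail, for the first part I would fix an arbitrary matroid constraint family $\calF$, invoke Theorem~\ref{thm:ExAnteProphetIneq} to obtain a $1/2$-approximation ex-ante prophet inequality under adversarial arrival order, and then appeal to Theorem~\ref{thm:reduction} with $c = 1/2$ to upgrade it to a $1/2$-OCRS over $\calF$. For the second part I would proceed identically but take the $(1-1/e)$-approximation ex-ante prophet inequality under uniformly random arrival order guaranteed by Theorem~\ref{thm:ExAnteProphetIneq}, and apply the random-order version of Theorem~\ref{thm:reduction} to extract a $(1-1/e)$-RCRS over $\calF$. Since both theorems are stated for arbitrary matroids and arbitrary packing constraints $\calF$ respectively, no further specialization is needed.

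The main ``obstacle'' is really external to this proof: all the work lies in establishing Theorems~\ref{thm:reduction} and~\ref{thm:ExAnteProphetIneq}, since the corollary is a one-line composition of their conclusions. Within the statement of the corollary itself, the only thing worth double-checking is that the quantifiers line up — namely that Theorem~\ref{thm:reduction}'s reduction preserves the approximation constant $c$ exactly (so that $1/2$ and $1-1/e$ carry through without loss), and that it applies in both the adversarial and random-order regimes that Theorem~\ref{thm:ExAnteProphetIneq} produces. Both of these are explicitly asserted in Theorem~\ref{thm:reduction}, so the composition goes through cleanly and yields the claimed $1/2$-OCRS and $(1-1/e)$-RCRS for matroids.
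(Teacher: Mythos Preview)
Your proposal is correct and matches the paper's approach exactly: the paper presents Corollary~\ref{cor:OCRSMain} as an immediate consequence of Theorems~\ref{thm:reduction} and~\ref{thm:ExAnteProphetIneq} without any additional argument. There is nothing more to do here.
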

Our $1/2$-OCRS above assumes that the arrival order is known to the algorithm. It is an interesting open question to find  a $1/2$-OCRS  for an almighty/online adversary as  in~\cite{FSZ-SODA16}.

We first prove that both the  factors $1/2$ and $(1-1/e)$ in Corollary~\ref{cor:OCRSMain} are \emph{optimal}.

%\snote{Can we get anything for the knapsack polytope?}

\paragraph*{Optimality of $1/2$-OCRS and $(1-1/e)$-RCRS}  \label{sec:OptimalityOfOCRS}
We  argue that the  factors  $1/2$ and $(1-1/e)$  in Corollary~\ref{cor:OCRSMain} are optimal even in the special case of a rank~$1$  matroid. For adversarial arrival, consider just two elements, i.e., $n=2$, with $x_1 = 1-\epsilon$ and $x_2=\epsilon$ for some $\epsilon \rightarrow 0$. Since the OCRS algorithm has to select the first element at least $1/2$ fraction of the times, it can attempt to select the second element at most $1/2+\epsilon/2$ fraction of the  times. 

For random arrival order, consider the feasible solution $\x$ with $x_i = 1/n$  for every $i \in \univ$. We show that no online RCRS algorithm can guarantee  each element is selected w.p. greater than $\frac{(1-1/e)}{n}$. This is because for the product distribution, w.p. $1/e$ none of the $n$ elements is active (more precisely, w.p. $(1-1/n)^n$). Hence the RCRS algorithm, which only selects active elements, selects some element w.p. $1-1/e$. This implies  on average it cannot pick every element w.p. greater than $\frac{(1-1/e)}{n}$. This example, originally shown in~\cite{CVZ-SICOMP14}, also proves that  offline CRS cannot better than $(1-1/e)$-selectable.

%------------------------------------------------------------------------------------------
%\subsection{Proof Techniques} \label{sec:proofTechniques}

\paragraph*{Our techniques}
We first see the difficulty in extending Alaei's greedy approach from  a rank~$1$ matroid to a general matroid. Consider the  graphic matroid for the \emph{Hat} example (see Figure~\ref{fig:HatExample}). Suppose the base edge $(u_1,u_2)$ appears in the end of an adversarial order. Notice that any algorithm which ignores the structure of the matroid is very likely to select some pair of edges $(u_1,v_i)$ and $(v_i,u_2)$ for some $i$. Since this pair spans the base edge $(u_1,u_2)$, such an OCRS algorithm will not satisfy $c$-selectability for $(u_1,u_2)$.
To overcome this, Feldman et al.~\cite{FSZ-SODA16}  decompose the matroid into ``simpler'' matroids using $\x$. However, it is not clear how to extend their approach beyond a $1/4$-OCRS.

\tikzstyle{point}=[circle, draw, fill=black!30, inner sep=0pt, minimum width=2pt]
\tikzstyle{graphnode}=[circle, draw, fill=black!30, inner sep=0pt, minimum width=8pt]
\tikzstyle{localgraphnode}=[circle, draw, fill=red!100, inner sep=0pt, minimum width=12pt]

\begin{figure}[ht]
\setlength{\abovecaptionskip}{-5pt}
\begin{center}
\begin{tikzpicture}[thin,scale=0.75]

	\foreach \y in {1,2,4,5}{
		\draw [thick] (-2,0) to (0,\y);
		\draw [thick] (0,\y) to (2,0);
%		\node at (0,\y) [graphnode]{};
	}
	\node at (0,1) [graphnode,label=left:$v_1$]{};
	\node at (0,2) [graphnode,label=left:$v_2$]{};
	\node at (0,4) [graphnode,label=left:$v_{n-1}$]{};
	\node at (0,5) [graphnode,label=left:$v_n$]{};
		
	\foreach \y in {2.75,3,3.25}{
		\node at (0,\y) [point]{};	
	}			
	\draw [thick] (-2,0) to (2,0);
	\node at (-2,0) [graphnode,label=below:$u_1$]{};
	\node at (2,0) [graphnode,label=below:$u_2$]{};	
\end{tikzpicture}
\end{center}
\caption{The Hat example on $n+2$ vertices. The following $\x$ belongs to the graphic matroid: $x_e=1/2$ for  $e=(u_i,v_j)$ where $i\in \{1,2\}$ and $j \in \{1,\ldots, n\}$, and $x_e=1$ for $e=(u_1,u_2)$.}
\label{fig:HatExample}
\end{figure}
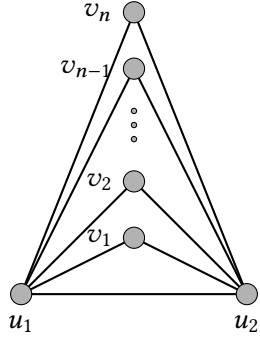

In this paper we take an alternate LP based approach to design OCRSs, which  was first used by Chekuri et al.~\cite{CVZ-SICOMP14} to design offline CRSs. The idea is to define an exponential sized linear program where each variable denotes a deterministic OCRS algorithm. The objective of this linear program is to maximize $c$ s.t. each  element is  selected at least $c$ fraction of the times ($c$-selectability). Thus to show existence of a $1/2$-OCRS, it suffices to prove this linear program has value $c \geq 1/2$. In \S\ref{sec:ExAnteToOCRS} we prove this by showing  that    the dual LP has value  at least $1/2$ because it can be interpreted as an ex-ante prophet inequality. 

Next, to show  there exists a $1/2$ approximation ex-ante prophet inequality, our approach is inspired from the matroid prophet inequality of Kleinberg and Weinberg~\cite{KW-STOC12}. They give an online algorithm that gets at least half of the expected offline optimum for the product distribution (independent r.v.s). 
Unfortunately, their techniques do not directly extend because the ex-ante relaxation objective could be significantly higher than for the product distribution (this is known as the \emph{correlation gap}, which can be $e/(e-1)$~\cite{ADSY-OR12,CCPV-SICOMP11}). 
Our primary technique  is to view the ex-ante relaxation solution as a ``special kind'' of a correlated value distribution.
Although prophet inequalities are not possible for general correlated  distributions~\cite{HillKertz-Journal92}, we show that in this special case  the original proof of the matroid prophet inequality algorithm  retains its $1/2$ approximation after some modifications.

%------------------------------------------------------------------------------------------
\subsection{Further Related Work} \label{sec:related}
%\cite{KW-STOC12}, \cite{CVZ-SICOMP14},

Krengel and Sucheston gave the first tight $1/2$-single item prophet inequality~\cite{Krengel-Journal78,Krengel-Journal77}. 
The connection between multiple-choice prophet inequalities and mechanism design was recognized in~\cite{HKS-AAAI07}; they proved a prophet inequality for uniform matroids. This bound was later improved by Alaei~\cite{Alaei-FOCS11} using the \emph{Magician's problem}, which is an  OCRS in disguise.
Chawla et al.~\cite{CHMS-STOC10} further developed the connection between prophet inequalities and mechanism design, and showed how to  be $O(1)$-prophet inequality for general matroids in a variant  where the algorithm may choose the element order. Yan~\cite{Yan-SODA11} improved this result to $e/(e-1)$-competitive using the \emph{correlation gap} for submodular functions,  first studied  in~\cite{ADSY-OR12,CCPV-SICOMP11}.
Chekuri et al.~\cite{CVZ-SICOMP14} adapted correlation gaps to a polytope to design CRSs. Improved correlation gaps were presented in~\cite{Yan-SODA11, GL-FSTTCS17}.
The {matroid prophet inequality} was first explicitly formulated in~\cite{KW-STOC12}. Feldman et al.~\cite{FSZ-SODA16} gave an alternate proof, and extended to Bernoulli submodular functions, using OCRSs. Finally, information theoretic $O(\poly\log(n))$-prophet inequalities are also known for general downward-closed constraints~\cite{Rubinstein-STOC16,RS-SODA17}.

The prophet secretary notion was first introduced in~\cite{EHLM-SIDMA17},  where the elements arrive in a uniformly random order and draw their values from known independent distributions. Their results have been recently improved~\cite{EHKS-SODA18,ACK-ArXiv17}. There is a long line of work on studying the \emph{commitment constraints} for combinatorial probing problems, e.g., see~\cite{GM-STOC07,GN-IPCO13,GNS-SODA16,GJSS-EC18}. In these models the algorithm starts with some stochastic knowledge about the input and on probing an element has to irrevocably commit if the element is to be included in the final solution. A common approach to handle such a constraint is using a prophet inequality/OCRS.

%%%%%%%%%%%%%%%%%%%%%%%%%%%%%%

%%%%%%%%%%%%%%%%%%%%%%%%%%%%%%
\section{OCRS Assuming an Ex-Ante Prophet Inequality} \label{sec:ExAnteToOCRS}
In this section we prove Theorem~\ref{thm:reduction}, showing how to reduce the problem of designing an OCRS to a prophet inequality where we compare ourself to the ex-ante relaxation instead of the expected offline maximum.

%------------------------------------------------------------------------------------------
\subsection{Using LP Duality}

Given a finite ground set $\univ$ with $n = |\univ|$ and a {downward-closed} family of feasible  subsets $\calF \subseteq 2^\univ$, let $P_{\calF} \subseteq [0, 1]^\univ$ be the convex hull of all characteristic vectors of feasible sets. Let $\x \in P_{\calF}$ and $\Rx$ denote a random set containing each element $i\in \univ$ independently w.p. $x_i$.
For offline CRSs, let $\Phi^*$ be the set of valid offline \emph{deterministic} mappings; i.e., $\phi : 2^{\univ} \to \calF$ is in $\Phi^*$ iff $\phi(A) \subseteq A$ and $\phi(A) \in \calF$ for all $A \subseteq \univ$. 
For $\phi \in \Phi^*$ and $i \in \univ$, let $q_{i, \phi} := \Pr_{\Rx} [i \in \phi(\Rx)]$ denote the probability of selecting $i$ if the CRS executes $\phi$. 
The following LP relaxation, introduced by Chekuri et al.~\cite{CVZ-SICOMP14}, finds a $c$-selectable \emph{randomized} CRS. 
It has  variables $\{ \lambda_{\phi} \}_{\phi \in \Phi^*}$  and $c$. 
\begin{align*}
\mbox{$\max_{\bm{\lambda},c}$ }  & ~c && \\
\mbox{s.t. } & \sum_{\phi \in \Phi^*} q_{i, \phi} \lambda_{\phi} \geq x_i \cdot c && i \in \univ \\
& \sum_{\phi \in \Phi^*} \lambda_{\phi}  = 1  && \\
& \lambda_{\phi} \geq 0 && \forall \phi \in \Phi^* 
\end{align*}
Observe that if the above LP has value  $c$, there exists a randomized $c$-CRS. This is because we can randomly select  one of the $\phi$'s w.p. $\lambda_{\phi}$, and the constraint
$\sum_{\phi \in \Phi^*} q_{i, \phi} \lambda_{\phi} \geq x_i \cdot c $ ensures $c$-selectability for every $i\in \univ$.
Chekuri et al. noticed that  by strong duality, to prove the above LP has value at least $c$,  it suffices to show that the following dual program has value at least $c$.
It has variables $\{ y_i \}_{i \in \univ}$ and $\mu$. 
\begin{align*}
\mbox{$\min_{\bm{y},\mu}$}  & ~\mu && \\
\mbox{s.t. } & \sum_{i \in \univ} q_{i, \phi} y_i \leq \mu  && \phi \in \Phi^* \\
& \sum_{i \in \univ} x_i y_i  = 1  && \\
& y_i \geq 0 && \forall i \in \univ 
\end{align*}

To design OCRSs (RCRSs), we take a similar approach as Chekuri et al and let $\Phi^*$ be the set of all {\em deterministic online algorithms}. Formally,
$\phi : 2^{\univ} \times 2^{\univ} \times \univ \to \{ 0, 1 \}$ belongs to $\Phi^*$ 
iff $\phi(A, B, i) = 1$ only for $B \subseteq A$, $i \not\in A$, and $B \cup \{ i \} \in \calF$. 
Intuitively, $\phi(A, B, i) = 1$ indicates that the online algorithm selects element $i$ in the current iteration after processing elements in $A$ and selecting elements in $B$. 
Let $q_{i, \phi}$ denote the probability of selecting $i$ if the OCRS (RCRS) executes $\phi$, where for RCRS we also take probability over the random order.
By the above duality argument, to show existence of a $c$-OCRS ($c$-RCRS) it suffices to prove the dual LP has value at least $c$. We prove this by showing that for any $\bm{y} \geq 0$ s.t. $\sum_{i \in \univ} x_i y_i  = 1$, there exists $\phi \in \Phi^*$ such that $\sum_{i \in \univ} q_{i, \phi} y_i \geq c$. %, where $c=1/2$ for OCRS and $c=(1-1/e)$ for RCRS. 

Consider a Bernoulli prophet inequality instance  where each element $i \in \univ$ has value $y_i$ with probability $x_i$, and $0$ otherwise. Since $\x \in P_{\calF}$, notice that $\sum_{i \in \univ} x_i y_i=1$ is exactly the value of the ex-ante relaxation~\eqref{eq:exanteUpper} for this instance. Thus, a $c$-approximation ex-ante prophet inequality implies there exists a $\phi \in \Phi^*$ with value at least $c$. By linearity of expectation, the value of $\phi$ is $\sum_{i \in \univ} q_{i, \phi} y_i$, which proves $\sum_{i \in \univ} q_{i, \phi} y_i \geq c$.

%To prove the dual is at least $c$,  Chekuri et al. show that for any $\bm{y} \geq 0$ satisfying $\sum_{i \in \univ} x_i y_i  = 1$ there exists $\phi \in \Phi^*$ such that $\sum_{i \in \univ} q_{i, \phi} y_i \geq c$. If we consider the Bernoulli situation where each element $i \in \univ$ has value $y_i$ with probability $x_i$ and $0$ otherwise, notice that this is exactly the definition of the $c$-approximation ex-ante prophet inequality. (In fact, the definition in~\eqref{eq:exanteUpper} compares the outcome of an online algorithm to the optimal value of a more general convex program, but one can easily see that $\x$ here is the maximizer in this setting.)

\IGNORE{
Given a finite ground set $\univ$ with $n = |\univ|$ and a {downward-closed} family of feasible  subsets $\calF \subseteq 2^\univ$, let $P_{\calF} \subseteq [0, 1]^\univ$ be the convex hull of all characteristic vectors of feasible sets. Let $\x \in P_{\calF}$ and $\Rx$ denote a random set containing each element $i\in \univ$ independently w.p. $x_i$.
For OCRSs, we let $\Phi^*$ be the set of all {\em deterministic online algorithms}. Formally,
$\phi : 2^{\univ} \times 2^{\univ} \times \univ \to \{ 0, 1 \}$ is in $\Phi^*$ 
iff $\phi(A, B, i) = 1$ only if $B \subseteq A$, $i \not\in A$, and $B \cup \{ i \} \in \calF$. 
Intuitively, $\phi(A, B, i) = 1$ indicates that the online algorithm picks the element $i$ in the current iteration after processing elements in $A$ and picking elements in $B$. 
%For offline CRSs, let $\Phi^*$ be the set of valid offline \emph{deterministic} mappings; i.e., $\phi : 2^{\univ} \to \calF$ is in $\Phi^*$ iff $\phi(A) \subseteq A$ and $\phi(A) \in \calF$ for all $A \subseteq \univ$. 
For $\phi \in \Phi^*$ and $i \in \univ$, let $q_{i, \phi} := \Pr_{\Rx} [i \in \phi(\Rx)]$ denote the probability of selecting $i$ if the OCRS executes $\phi$. (For RCRS, the probability is also over the random order.)
Similar to Chekuri et al. [CVZ14, Section 4.2], we use the solution to the following linear program to design a randomized OCRS
 
The following LP relaxation, introduced by Chekuri et al.~\cite{CVZ-SICOMP14}, finds a $c$-selectable \emph{randomized} CRS. 
It has  variables $\{ \lambda_{\phi} \}_{\phi \in \Phi^*}$  and $c$. 
}

%------------------------------------------------------------------------------------------
\subsection{Solving the LP Efficiently}
While the original primal LP has an exponential number of variables, we can compute an OCRS (or RCRS) that achieves  value at least $c$  as follows. 
In the dual program, given $\bm{y}$ s.t. $\sum_i x_i y_i = 1$, we can use the ex-ante prophet inequality to find $\phi \in \Phi^*$ with value $\sum_i q_{i, \phi} y_i \geq c$ in polynomial time. (Notice  $q_{i, \phi}$ can be computed in polynomial time because  the adversarial order is known to the OCRS algorithm.) 
This implies  for any $\epsilon > 0$, the polytope $Q_{c - \epsilon} := \{ \bm{y} : \bm{y} \geq 0, \sum_{i} x_i y_i = 1, \sum_{i} q_{i,\phi}{y_i} \leq c - \epsilon \mbox{ for all } \phi \in \Phi^* \}$ is empty. 

Since we have an efficient {\em separation oracle} (for any $y$, we can find a violated constraint in polynomial time) for $Q_{c - \epsilon}$, by running the ellipsoid algorithm~\cite{GLS-Book81} we can find a subset $\Phi' \subseteq \Phi^*$ with $|\Phi'| = \poly(n)$
%\enote{some dependence on $\eps$? Need two $\eps$'s?} 
in polynomial time such that $Q'_{c - \epsilon} := \{ \bm{y} : \bm{y} \geq 0, \sum_{i} x_i y_i = 1, \sum_{i} q_{i,\phi}{y_i} \leq c - \epsilon \mbox{ for all } \phi \in \Phi' \}$ is empty. Now the following  linear program, which has a polynomial number of variables and constraints, with optimal value at least $c - \epsilon$ can be solved efficiently. 
\begin{align*}
\mbox{$\max_{\bm{\lambda},c}$ }  & ~c && \\
\mbox{s.t. } & \sum_{\phi \in \Phi'} q_{i, \phi} \lambda_{\phi} \geq x_i \cdot c && i \in \univ \\
& \sum_{\phi \in \Phi'} \lambda_{\phi}  = 1  && \\
& \lambda_{\phi} \geq 0 && \forall \phi \in \Phi' 
\end{align*}

%\enote{RCRS?}
\iffalse
We want to find a feasible solution in the following polytope $P$. 
%We change $\geq x_i / 2$ to $= x_i / 2$ without loss of generality by monotonicity. 
\begin{align*}
& \sum_{\phi \in \Phi^*} q_{i, \phi} \lambda_{\phi} = \frac{1}{2} \cdot x_i && i \in \univ \\
& \sum_{\phi \in \Phi^*} \lambda_{\phi}  = 1  && \\
& \lambda_{\phi} \geq 0 && \forall \phi \in \Phi^* \\
\end{align*}
By Farkas' lemma, this LP is infeasible if and only if there exists $y, c$ such that $y $ . 
\fi

%\end{document}

%%%%%%%%%%%%%%%%%%%%%%%%%%%%%%

\section{Ex-Ante Prophet Inequalities for a Matroid}
\label{sec:prophet}
This section proves Theorem~\ref{thm:ExAnteProphetIneq} by designing for a matroid 
a $1/2$-ex-ante prophet inequality  under  adversarial arrival and 
a $(1 - 1/e)$-ex-ante prophet inequality  under  random arrival.

\subsection{Notation} \label{sec:OCRSNotation}

Let $\bval \sim \calD$ be a set of random element values $\{\val_1, \dots, \val_n\}$ where each $\val_i$ is independently drawn from $\calD_i$. Let $\x$ be the optimal solution to the ex-ante relaxation in \eqref{eq:exanteUpper} for a given matroid $\M = (\univ, \mathcal{I})$.
For $i\in \univ$,  denote
\begin{align} \label{eq:DefnYiOCRS}
y_i := \E_{\val_i\sim \calD_i}[\val_i \mid \text{$\val_i$ takes value in its top $x_i$ quantile}]. 
\end{align}
%\enote{From Section 2, we can just focus on Bernoulli case, and for arbitrary $x$? Perhaps it will be better to keep Ex-Ante in this general form..}
Since $\x \in P_{\calm}$, we can write it as a convex combination of independent sets in the matroid. In particular, this gives a correlated distribution $\hcalD$ over independent sets of $\M$ such that 
for each $i\in \univ$, we have $\Pr_{I \sim \hcalD}[i \in I] = x_i$. 
Let $\hbval = \{ \hval_1, \dots, \hval_n \}$ be a set of random values obtained by  sampling $I\sim \hcalD$ and setting $\hval_i = y_i$ for $i \in I$, and $\hval_i = 0$ otherwise. 
Notice  the optimal value of \eqref{eq:exanteUpper} is $\sum_i x_i y_i$ and for each $i\in \univ$, we have $ \E[\hval_i] = x_i y_i$. 
%{\color{red}They correspond to $w, w'$ in KW's definitions. Note that the only difference from KW is that for $X'$ different items may be correlated (but still $X'$ is independent from $X$). }

We need the following notation to describe our algorithms.
\begin{defn} \label{def:RemCost}
For any vector $\hbval$ denoting values of elements of  $\univ$ and any $A\subseteq \univ$, we define:
\begin{itemize}
\item Let  $\Opt(\hbval \mid A) {\subseteq \univ \setminus A}$ denote the maximum value independent set  in the contracted matroid $\M/A$.

\item Let  $R(A,\hbval):= \sum_{i\in \Opt(\hbval \mid A)} \hat{\val}_i $ denote the \emph{remaining value} after selecting set $A$.
\end{itemize}
%\item $C(A,\hbval) := \sum_{i\in \Opt(\hbval)}\hat{X}_i - \sum_{i\in \Opt(\hbval\mid A)} \hat{X}_i$  denotes the cost of selecting $A$.
\end{defn}

We next define a \emph{base price} of for every element $i$.
\begin{defn} For $A \in \mathcal{I}$ denoting an independent set of  elements accepted by our algorithm, we define
\begin{itemize}
\item Let $\baseprice_i(A,\hbval) := R(A,\hbval) - R(A\cup \{ i \},\hbval)$  denote a threshold for element $i$.
\item Let $\baseprice_i(A) := \E_{\hbval \sim \hcalD}[\baseprice_i(A,\hbval)]$ denote  the \emph{base price} for element $i$.
\end{itemize}

\end{defn}

%-------------------------------------
\subsection{Reducing to Bernoulli Distributions} \label{sec:ReducToBernoulliOCRS}
In this section we show that it suffices to only prove Theorem~\ref{thm:ExAnteProphetIneq} for Bernoulli distributions. 

\begin{lemma} \label{lem:OCRSGenToBern}
If there exists an $\alpha$-approximation ex-ante prophet inequality for Bernoulli distributed independent random values then there exists  an  $\alpha$-approximation ex-ante prophet inequality for general distributed independent random values.
\end{lemma}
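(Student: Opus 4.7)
The plan is to reduce the general-distribution instance to an independent Bernoulli instance on the same matroid $\M$, by coarsening each $\val_i \sim \calD_i$ down to the indicator of whether $\val_i$ lies in its top $x_i$ quantile. Specifically, I would associate to each $i \in \univ$ an independent Bernoulli random value $\tilde\val_i$ that equals $y_i$ (from \eqref{eq:DefnYiOCRS}) with probability $x_i$ and $0$ otherwise. A short calculation shows that the ex-ante relaxation \eqref{eq:exanteUpper} for this Bernoulli instance is maximized at the same $\x$ and attains the same value $\sum_i x_i y_i$ as the original instance: for a Bernoulli variable on $\{0,y_i\}$ with mass $x_i$, the quantity $z_i \cdot \E_{}[\tilde\val_i \mid \tilde\val_i \in \text{top } z_i \text{ quantile}]$ equals $y_i \cdot \min(x_i, z_i)$, which is maximized at $z_i = x_i$ (and $\x \in P_{\M}$ is feasible).

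Next, I would apply the hypothesized $\alpha$-approximation Bernoulli ex-ante prophet inequality to this Bernoulli instance, using the very same arrival order (adversarial or uniformly random) as in the general instance. This yields an online algorithm $\A_B$ whose expected value is at least $\alpha \sum_i x_i y_i$.

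To turn $\A_B$ into an algorithm for the general instance, I would couple the two processes: when $\val_i$ is revealed, check whether $\val_i$ lies in the top $x_i$ quantile of $\calD_i$ (using randomized tie-breaking if $\calD_i$ has an atom at the boundary, so that this event has probability exactly $x_i$). If yes, feed to $\A_B$ the observation ``element $i$ is active with value $y_i$''; otherwise, feed ``element $i$ is inactive''. Accept $i$ in the general instance iff $\A_B$ accepts $i$, collecting the true value $\val_i$. Because $\val_1,\dots,\val_n$ are mutually independent, the indicators $\mathbf{1}[\val_i \in \text{top } x_i \text{ quantile}]$ are mutually independent Bernoullis with parameter $x_i$, so the stream fed to $\A_B$ has exactly the joint distribution of the Bernoulli instance, and $\A_B$'s guarantee applies on this stream.

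The analysis then reduces to one clean observation, which is the step I expect to require the most care: conditional on $\val_i$ lying in its top $x_i$ quantile, the decision of $\A_B$ on element $i$ is independent of the exact realization of $\val_i$, since $\A_B$ only sees the active/inactive indicator and the fixed value $y_i$. This gives $\E[\val_i \cdot \mathbf{1}[i \text{ accepted}]] = y_i \cdot \Pr[\A_B \text{ accepts } i]$, and summing over $i$ yields expected value $\E[\A_B] \geq \alpha \sum_i x_i y_i$, which is $\alpha$ times the original ex-ante value. The same argument works verbatim for both adversarial and uniformly random arrival since the arrival permutation is shared between the two coupled instances, so a $c$-approximation Bernoulli ex-ante prophet inequality of either flavor lifts to a $c$-approximation for general distributions of the same flavor.
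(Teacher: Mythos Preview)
Your proposal is correct and follows essentially the same route as the paper: construct the Bernoulli instance $\tilde\val_i \in \{0,y_i\}$ with mass $x_i$, verify it has the same ex-ante optimum $\sum_i x_i y_i$, couple by declaring $i$ active iff $\val_i$ falls in its top $x_i$ quantile, run the Bernoulli algorithm on the resulting stream, and use that conditional on activity the true value has mean $y_i$ while the algorithm's decision is independent of the precise realization. You supply more detail than the paper does (randomized tie-breaking at atoms, the explicit computation $z_i \cdot \E[\tilde\val_i \mid \text{top } z_i] = y_i \min(x_i,z_i)$), but the skeleton is identical.
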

\begin{proof}
Given a prophet inequality instance where $\bval \sim \calD$ for a general distribution $\calD$, consider a new Bernoulli prophet inequality instance $\bval' \sim \calD'$ where for each $i \in \univ$, r.v. $\val'_i \sim \calD_i'$ independently takes value $y_i$ (defined in \eqref{eq:DefnYiOCRS}) w.p. $x_i$, and is $0$ otherwise. Since the optimal ex-ante fractional value for both the general and Bernoulli instance is the same, to prove this theorem we use an  ex-ante prophet inequality for the Bernoulli instance to design an ex-ante  prophet inequality  for the general instance with the same expected value. 

On arrival of an element $i$, consider an algorithm for the general distribution that treats $i$ is active  iff  $\val_i$ takes value in its top $x_i$ quantile. If active, the algorithm asks the ex-ante  prophet inequality of the Bernoulli instance to decide whether to select $i$. We claim that the expected value of this algorithm is $\alpha \cdot \sum_i x_i y_i$, which will prove this theorem.
The claim is true because for the above algorithm  each element $i$ is active independently w.p. exactly $x_i$, and conditioned on being active its expected value is exactly $y_i$. Thus by linearity of expectation, the expected value is the same as the Bernoulli instance, which is $\alpha \cdot \sum_i x_i y_i$.
\end{proof}

%-------------------------------------
\subsection{Adversarial Order} \label{sec:adversMatroidOCRS}
We prove the optimal ex-ante prophet inequality for a matroid under the adversarial arrival. 

\begin{theorem} \label{thm:ExAntMatrAdvers}
For matroids, there exists a  $1/2$-approximation ex-ante prophet inequality   for  adversarial arrival order.
\end{theorem}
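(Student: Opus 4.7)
\textbf{Proof proposal for Theorem~\ref{thm:ExAntMatrAdvers}.}
By Lemma~\ref{lem:OCRSGenToBern} it suffices to exhibit a $1/2$-ex-ante prophet inequality for the Bernoulli instance in which $\val'_i \in \{0, y_i\}$ with $\Pr[\val'_i = y_i] = x_i$ independently. Using that $\x \in P_{\M}$, decompose $\x$ as a convex combination of characteristic vectors of independent sets, yielding a distribution $\hcalD$ over independent sets $I$, and let $\hbval$ be the coupled random vector with $\hval_i = y_i \cdot \mathbb{1}[i \in I]$. The crucial observation is that $\hbval$ has the same per-element marginals as $\bval'$, yet $\hbval$ is \emph{always} supported on a set in $\mathcal{I}$; in particular $\E_{\hbval \sim \hcalD}[R(\emptyset,\hbval)] = \sum_i x_i y_i$, so matching the ex-ante benchmark is the same as matching the offline optimum of the correlated distribution $\hcalD$.

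The plan is to adapt the Kleinberg--Weinberg matroid prophet inequality, but with thresholds defined from the correlated distribution $\hcalD$ instead of the product one. Write $v^\ast(A) := \E_{\hbval \sim \hcalD}[R(A,\hbval)]$, so $\baseprice_i(A) = v^\ast(A) - v^\ast(A \cup \{i\})$. The algorithm maintains the accepted set $A$ and, upon arrival of element $i$ with realized value $\val'_i$, accepts $i$ iff $A \cup \{i\} \in \mathcal{I}$ and $\val'_i \geq \tfrac{1}{2}\baseprice_i(A)$. Let $A^\ast$ denote the final accepted set and, for each accepted $i$, let $A_i$ be the set of previously accepted elements when $i$ arrived.

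Following the two-part decomposition $\E[\text{ALG}] = \text{Revenue} + \text{Utility}$, where the revenue of an accepted element is $\tfrac{1}{2}\baseprice_i(A_i)$ and the utility is $\val'_i - \tfrac{1}{2}\baseprice_i(A_i)$, I would first prove the easy telescoping bound
\[
\text{Revenue} \;=\; \tfrac{1}{2}\E\!\left[\sum_{i\in A^\ast}\bigl(v^\ast(A_i)-v^\ast(A_i\cup\{i\})\bigr)\right] \;=\; \tfrac{1}{2}\,\E\bigl[v^\ast(\emptyset)-v^\ast(A^\ast)\bigr].
\]
If we can also show $\text{Utility} \geq \tfrac{1}{2}\E[v^\ast(A^\ast)]$, summing yields $\E[\text{ALG}] \geq \tfrac{1}{2} v^\ast(\emptyset) = \tfrac{1}{2}\sum_i x_i y_i$, as required.

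The main obstacle is precisely the utility bound, since the original Kleinberg--Weinberg argument uses the independence of the values. The modification I would make is to exploit the matroid-exchange identity pointwise in $\hbval$: for each realization $\hbval \sim \hcalD$, the set $\Opt(\hbval \mid A^\ast) \subseteq \univ \setminus A^\ast$ is an independent set of $\M/A^\ast$, so by the matroid basis-exchange lemma there is a bijection $\sigma: \Opt(\hbval \mid A^\ast) \to A^\ast \cup \Opt(\hbval \mid A^\ast)$ with $A^\ast \cup \{j\} \setminus \{\sigma(j)\} \in \mathcal{I}$ for every $j$ mapped to $A^\ast$. Using this bijection I can charge the value $\hval_j$ of each element $j \in \Opt(\hbval \mid A^\ast)$ either to its own utility (if $j$ was acceptable on arrival) or to the utility of the blocker $\sigma(j) \in A^\ast$ (via the threshold structure that, crucially, uses the same $\hcalD$ whose marginals match those of $\bval'$). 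The matching of marginals lets me take expectations over $\bval'$ on one side and $\hbval$ on the other without loss, and the standard Kleinberg--Weinberg token-counting then delivers the $\tfrac{1}{2} v^\ast(A^\ast)$ bound. Making this exchange argument robust to the correlation in $\hcalD$ is the technical heart of the proof.
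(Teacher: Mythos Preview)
Your setup, algorithm, Revenue/Utility decomposition, and telescoping Revenue computation all match the paper exactly. The divergence is in how you handle the Utility bound $\E[\Util]\ge\tfrac12\,\E[v^\ast(A^\ast)]$.

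The paper does \emph{not} build an explicit basis-exchange bijection between $\Opt(\hbval\mid A^\ast)$ and $A^\ast$ and then charge element by element. Instead it proceeds in four short moves: (i) since any element rejected by the threshold test satisfies $\val'_i\le \T_i$, one may write $\Util=\sum_{i\in A^\ast}(\val'_i-\T_i)^+=\sum_{i\in\univ}(\val'_i-\T_i)^+$; (ii) because $\T_i$ is measurable with respect to $\val'_1,\dots,\val'_{i-1}$ and $\val'_i,\hval_i$ share the same \emph{marginal}, one swaps term by term to get $\E[\Util]=\E_{\bval',\hbval}\big[\sum_{i\in\univ}(\hval_i-\T_i)^+\big]$; (iii) drop all terms outside $\Opt(\hbval\mid A^\ast)$; (iv) invoke the threshold-sum inequality $\sum_{i\in S}\T_i\le\tfrac12\,\E_{\hbval}[R(A^\ast,\hbval)]$ for any $S$ independent in $\M/A^\ast$ (Claim~\ref{lem:KWProp2Advers}), which holds for every fixed $\hbval$, so the correlation in $\hcalD$ is irrelevant here. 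Writing $\hval_i\le\T_i+(\hval_i-\T_i)^+$ and applying (iv) with $S=\Opt(\hbval\mid A^\ast)$ gives the bound immediately. Note that the matroid exchange property enters only in step~(iv), and there it is used on $\Opt(\hbval\mid A^\ast)$ against itself, not via a bijection with $A^\ast$.

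Your bijection-and-charging plan is closer to the original Kleinberg--Weinberg presentation, but as written it has a coupling hazard: you propose to ``charge $\hval_j$ to the utility of the blocker $\sigma(j)\in A^\ast$,'' yet $\sigma$ is a function of $\hbval$ while the blocker's utility is a function of $\bval'$, and the two are only linked through equality of one-dimensional marginals. The paper's route sidesteps this completely by performing the $\bval'\!\to\hbval$ swap first, over all of $\univ$ (step~(ii)), and only afterwards restricting to $\Opt(\hbval\mid A^\ast)$; at that point everything on the right-hand side lives in the $\hbval$-world and no per-element pairing with $A^\ast$ is needed. If you follow steps (i)--(iv) you will see that the ``robustness to correlation'' you flag as the technical heart comes essentially for free.
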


Given the notation and definitions in \S\ref{sec:OCRSNotation}, the proof of Theorem~\ref{thm:ExAntMatrAdvers} is  similar to the proof of the matroid prophet  inequality in~\cite{KW-STOC12}.

By Lemma~\ref{lem:OCRSGenToBern}, we know it suffices to prove this theorem only for Bernoulli distributions. Consider $\bval \sim \calD$ as the input to our online algorithm, where $\val_i$ takes value $y_i$ w.p. $x_i$ and is $0$ otherwise. Given $\bval$, our algorithm is deterministic and let $A := A(\bval)$ denote the set of elements that it selects. Relabel the elements such that the arrival  order of the elements is $1,\ldots, n$.
Let $A_i = A \cap \{ 1, \dots, i \}$. 

Our algorithm selects the next element $i$ iff both $\val_i > \T_i := \alpha \cdot \baseprice_i(A_{i-1}) $ and selecting $i$ is feasible in $\M$, where $\alpha =\frac12$. Thus, the total value of algorithm  $\Alg := \sum_{i\in A} \val_i = \Rev + \Util$, where
\[ \textstyle{ \Rev := \sum_{i\in A} \T_i \qquad \text{and} \qquad \Util := \sum_{i\in A} (v_i-\T_i)^+ .}
\]
Since $\sum_{i\in\univ} x_i y_i$  is the optimal value of \eqref{eq:exanteUpper}, 
to prove Theorem~\ref{thm:ExAntMatrAdvers} it suffices to show  $\E[\Alg] = \E[\Rev] + \E[\Util] \geq \alpha \cdot \sum_{i\in\univ} x_i y_i$. 

We keep track of the algorithm's progress using the following \emph{residual} function: 
\[ r(i) := \E_{\bval \sim \calD, \hbval \sim \hcalD}[R(A_{i-1},\hbval)]. \]  
Clearly, $r(0) = \sum_{i\in\univ} x_i y_i$. 
In the following  Lemma~\ref{lem:revAdvProph} and Lemma~\ref{lem:utilAdvProph}, we use the residual function to lower bound  $\E[\Rev]$ and $\E[\Util]$.

%{Starting with $A_0 =\emptyset$, let $A_i\subseteq \univ$ denote the set of accepted elements \emph{before} the  arrival of the next element $i \in \univ$. This $A$ is a random variable that depends on the values $\bval$.}

\begin{lemma}\label{lem:revAdvProph}
$\E_{\bval \sim \calD}[\Rev] = \alpha \cdot \Big( r(0) -  r(n) \Big).$
%{\color{red}= - \int_{t=0}^1 \alpha \cdot r'(t) dt}
\end{lemma}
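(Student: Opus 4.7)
\textbf{Proof plan for Lemma~\ref{lem:revAdvProph}.}
The plan is to rewrite $\Rev$ as a telescoping sum of single-step drops in $R(A_{i-1},\hbval)$, crucially exploiting that the ``phantom'' instance $\hbval\sim\hcalD$ is drawn \emph{independently} of the actual values $\bval\sim\calD$ driving the algorithm. By definition,
\[
\E_{\bval}[\Rev] \;=\; \E_{\bval}\!\left[\sum_{i \in A} \T_i\right] \;=\; \alpha \sum_{i=1}^n \E_{\bval}\!\left[\mathbf{1}[i\in A]\cdot \baseprice_i(A_{i-1})\right].
\]
First I would expand $\baseprice_i(A_{i-1}) = \E_{\hbval\sim\hcalD}[\baseprice_i(A_{i-1},\hbval)]$. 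Since $A_{i-1}$ is a deterministic function of $\val_1,\ldots,\val_{i-1}$ (the algorithm is deterministic given $\bval$), and since $\hbval$ is independent of $\bval$, I can pull $\hbval$ under the outer expectation and write
\[
\E_{\bval}\!\left[\mathbf{1}[i\in A]\cdot \baseprice_i(A_{i-1})\right]
\;=\;
\E_{\bval,\hbval}\!\left[\mathbf{1}[i\in A]\cdot \baseprice_i(A_{i-1},\hbval)\right].
\]

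Next I would use the identity
\[
\mathbf{1}[i\in A]\cdot\baseprice_i(A_{i-1},\hbval) \;=\; R(A_{i-1},\hbval) - R(A_i,\hbval),
\]
which holds pointwise: when $i\in A$ we have $A_i=A_{i-1}\cup\{i\}$, so the right-hand side equals $\baseprice_i(A_{i-1},\hbval)$ by Definition~\ref{def:RemCost}; when $i\notin A$ we have $A_i=A_{i-1}$, so both sides are $0$. Summing over $i=1,\ldots,n$ now produces a telescoping sum:
\[
\sum_{i=1}^n \bigl(R(A_{i-1},\hbval) - R(A_i,\hbval)\bigr) \;=\; R(A_0,\hbval) - R(A_n,\hbval) \;=\; R(\emptyset,\hbval) - R(A,\hbval).
\]

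Finally, taking expectations over $(\bval,\hbval)$ and recognizing that $\E[R(\emptyset,\hbval)]=\sum_i x_i y_i = r(0)$ (since $R(\emptyset,\hbval) = \sum_{i\in I} y_i$ for $I\sim\hcalD$, and $\Pr[i\in I]=x_i$) while $\E[R(A_{n-1},\hbval)] \cdot$ is read off from the definition as $r(n)$, we obtain
\[
\E_{\bval}[\Rev] \;=\; \alpha\cdot\E_{\bval,\hbval}\!\bigl[R(\emptyset,\hbval) - R(A,\hbval)\bigr] \;=\; \alpha\bigl(r(0)-r(n)\bigr),
\]
as claimed. The only non-routine point is the interchange in the second display, and the main obstacle is making the independence of $\hbval$ from the algorithm's trajectory explicit; this is why $\hcalD$ was introduced as a separate, correlation-respecting coupling rather than using the input distribution $\calD$ to evaluate base prices.
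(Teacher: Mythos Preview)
Your proof is correct and follows essentially the same telescoping argument as the paper: expand $\baseprice_i(A_{i-1})$ via $\hbval$, collapse the sum over accepted elements into $R(A_0,\hbval)-R(A,\hbval)$, and take expectations. The only difference is cosmetic---you carry the indicator $\mathbf{1}[i\in A]$ and make the independence of $\hbval$ from $\bval$ explicit, whereas the paper sums directly over $i\in A$ for fixed $\bval$ and then averages; also note a small typo in your penultimate display (a stray $\cdot$ and $A_{n-1}$ where the telescoping actually yields $A_n=A$).
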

\begin{proof}
From the definition of $\Rev$, we get
\begin{align*}
\Rev &= \alpha \cdot \sum_{i\in A} b_i(A_{i-1}) ~ = ~ \alpha \cdot \sum_{i \in A} \Big( \E_{\hbval}[R(A_{i-1},\hbval)] - \E_{\hbval}[R(A_{i-1} \cup \{i\},\hbval)] \Big) \\
&= \alpha \cdot \sum_{i \in A} \Big( \E_{\hbval}[R(A_{i-1},\hbval)] - \E_{\hbval}[R(A_{i} ,\hbval)] \Big)
~=~  \alpha \cdot \Big( \E_{\hbval}[R(A_{0},\hbval)] - \E_{\hbval}[R(A,\hbval)]  \Big).
 \end{align*}
 Taking expectation over $\bval \sim \calD$ and using  definitions of $r(0)$ and $r(n)$, the lemma follows.
%\[ \E_{\bval}[\Rev] \geq \alpha \cdot \left( r(0) - \E_{\bval,\hbval}[R(A,\hbval)]  \right). \]
\end{proof}

\begin{lemma}\label{lem:utilAdvProph}
$E_{\bval \sim \calD}[\Util] \geq    (1-\alpha) \cdot r(n).$ %{\color{red} =  \int_{t=0}^1  (1-\alpha(t)) \cdot r(t) dt}
\end{lemma}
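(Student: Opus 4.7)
The plan is to adapt the matroid prophet inequality argument of Kleinberg and Weinberg~\cite{KW-STOC12} (KW) to our setting, where the benchmark $r(n)=\E_{\bval,\hbval}[R(A,\hbval)]$ uses the correlated hallucination $\hcalD$ rather than the independent input distribution $\calD$. By Lemma~\ref{lem:OCRSGenToBern} we may work in the Bernoulli setting, so $\val_i \in \{0, y_i\}$ independently and $\val_i = y_i$ whenever $i \in A$.

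The heart of the proof is a matroid-exchange observation. I would fix a realization of $\hbval$ yielding an independent set $I \subseteq \univ$ of value-carrying elements; for each $\bval$ with $A = A(\bval)$, set $O := \Opt(\hbval \mid A) \subseteq I \setminus A$. Since $A \cup O \in \mathcal{I}$, for every $j \in O$ the prefix $A_{j-1} \cup \{j\} \subseteq A \cup O$ is itself independent, so the algorithm's rejection of $j$ at its arrival implies $\val_j \leq \T_j$ rather than infeasibility---this plays the role of KW's exchange step. Given this, I would execute KW's potential/telescoping argument: the threshold $\T_k = \alpha\cdot \baseprice_k(A_{k-1}) = \alpha\cdot\E_\hbval[R(A_{k-1},\hbval)-R(A_{k-1}\cup\{k\},\hbval)]$ is calibrated so that accumulating the per-step accounting of utility $(y_k-\T_k)$ (gained only when $\val_k=y_k$, feasible, and above threshold) against the corresponding decrease in $\E_\hbval[R(\cdot,\hbval)]$, and telescoping, yields $\E_\bval[\Util] \geq (1-\alpha)\, r(n)$.

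The main obstacle is that the per-$\hbval$ version of this inequality can fail (a single-element example with small $x_1$ already confirms this), so the averaging over $\hbval \sim \hcalD$ is essential rather than incidental. The key structural property that makes the KW-style per-realization accounting close after taking the joint expectation is the marginal-matching $\Pr_\hcalD[j \in I] = x_j = \Pr_\calD[\val_j = y_j]$, which aligns the hallucinated support probabilities with the input marginals and ensures that the exchange argument applies symmetrically in $(\bval,\hbval)$.
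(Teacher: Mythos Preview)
Your exchange observation is correct and relevant: for $j\in O=\Opt(\hbval\mid A)$ we have $A_{j-1}\cup\{j\}\subseteq A\cup O\in\mathcal{I}$, so $j$ was feasible at its arrival and its non-selection really does force $\val_j\le\T_j$. You also correctly identify the marginal-matching $\Pr_\hcalD[\hval_i=y_i]=x_i=\Pr_\calD[\val_i=y_i]$ as the structural fact that bridges $\calD$ and $\hcalD$. But your middle step---``execute KW's potential/telescoping argument''---is not the mechanism that proves the \emph{utility} bound. Telescoping the drops in $\E_\hbval[R(\cdot,\hbval)]$ is exactly what yields Lemma~\ref{lem:revAdvProph} (the revenue side). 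If you accumulate, as you describe, the per-step utility $(y_k-\T_k)$ against those same drops and telescope, you obtain only the identity $\E[\Util]=\E[\Alg]-\alpha\,r(0)+\alpha\,r(n)$, which is $\E[\Util]=\E[\Alg]-\E[\Rev]$; this is circular and does not give $\E[\Util]\ge(1-\alpha)r(n)$ without already knowing the theorem.

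The paper's argument (and KW's) uses two steps your sketch does not carry out. First, an element-wise \emph{distributional swap}: because $\T_i$ and the feasibility indicator $\one_{A_{i-1}\cup\{i\}\in\mathcal{I}}$ are determined by $\val_1,\dots,\val_{i-1}$, and $\hval_i$ has the same marginal as $\val_i$, one has $\E_\bval[(\val_i-\T_i)^+]=\E_{\bval,\hbval}[(\hval_i-\T_i)^+]$ for each $i$ separately; summing over $i$ and then restricting to $i\in O$ (valid precisely by your exchange observation) gives $\E[\Util]\ge\E_{\bval,\hbval}\big[\sum_{i\in O}(\hval_i-\T_i)^+\big]$. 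Note this swap is where the marginal-matching is actually used, and it works even though the $\hval_i$ are correlated across $i$, because it is applied one coordinate at a time. Second, the \emph{balance} inequality of Claim~\ref{lem:KWProp2Advers}: for $S$ independent in $\M/A$, $\sum_{i\in S}\T_i\le\alpha\,R(A,\hbval)$; applied with $S=O$ together with $\sum_{i\in O}\hval_i=R(A,\hbval)$ it yields $\sum_{i\in O}(\hval_i-\T_i)^+\ge(1-\alpha)R(A,\hbval)$. Your proposal gestures at the swap via ``marginal-matching'' but never states where it is applied, and omits the balance inequality entirely; without both pieces the argument does not close.
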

\begin{proof} %[Proof of Lemma~\ref{lem:utilAdvProph}]
We prove the following two inequalities:
\begin{equation}
\E_{\bval \sim \calD}[\Util] 
\quad \geq \quad 
\E_{\bval \sim \calD, \hbval \sim  \hcalD} \Big[\sum_{i \in \Opt(\hbval \mid A)} (\hval_i - \T_i)^+ \Big] 
\label{eq:magic1}
\end{equation}
and 
\begin{equation}
\E_{\bval \sim \calD, \hbval \sim \hcalD} \Big[\sum_{i \in \Opt(\hbval \mid A)} (\hval_i - \T_i)^+ \Big]  \quad  \geq \quad 
(1-\alpha) \cdot \E_{\bval \sim \calD, \hbval \sim \hcalD} [R(A,\hbval)].
\label{eq:magic2}
\end{equation}
Lemma~\ref{lem:utilAdvProph} now follows by summing  \eqref{eq:magic1} and \eqref{eq:magic2}, and using $r(n)= \E_{\bval \sim \calD, \hbval \sim \hcalD} [R(A,\hbval)]$.

To prove \eqref{eq:magic1}, notice that  for any  $i$ not selected by the algorithm,  $\val_i \leq \T_i$. This implies
\[
\E_{\bval \sim \calD}[\Util]  = \E_{\bval} \Big[\sum_{i \in A} (\val_i - \T_i)^+\Big] 
= 
\E_{\bval} \Big[\sum_{i \in \univ} (\val_i - \T_i)^+ \Big] .
\]
Now observe that for any fixed $i$ and $\val_1, \dots, \val_{i - 1}$, the threshold $\T_i$ is determined. Since $\val_i$ and $\hval_i$ are independent random variables with the same distribution, we get 
\[
\E_\bval[(\val_i - \T_i)^+ | \val_1, \dots, \val_{i - 1}] = 
\E_{\bval, \hbval} [(\hval_i - \T_i)^+ | \val_1, \dots, \val_{i - 1}]. 
\]
This implies
\begin{align*}
\E_{\bval \sim \calD}[\Util] = \E_{\bval} \Big[\sum_{i \in \univ} (\val_i - \T_i)^+ \Big] 
 = \E_{\bval, \hbval} \Big[\sum_{i \in \univ} (\hval_i - \T_i)^+ \Big] 
 \geq 
\E_{\bval, \hbval} \Big[\sum_{i \in \Opt(\hbval \mid A)} (\hval_i - \T_i)^+ \Big].
\end{align*}
%where the equality follows from the fact that for any fixed $i$ and $\val_1, \dots, \val_{i - 1}$, the threshold $\T_i$ is determined but $\val_i$ and $\hval_i$ are independent random variables with the same distribution. 

Finally, to prove  \eqref{eq:magic2}, we have
\begin{align*}
\E_{\bval, \hbval} [R(A,\hbval)] = \E_{\bval, \hbval} \Big[\sum_{i \in \Opt(\hbval \mid A) } \hval_i \Big]
& \leq \E_{\bval, \hbval} \Big[\sum_{i \in \Opt(\hbval \mid A) } \T_i \Big] + \E_{\bval, \hbval} \Big[\sum_{i \in \Opt(\hbval \mid A)} (\hval_i - \T_i)^+ \Big] \\
& \leq \alpha \cdot \E_{\bval, \hbval} [R(A,\hbval)]  + \E_{\bval, \hbval} \Big[\sum_{i \in \Opt(\hbval \mid A) } (\hval_i - \T_i)^+ \Big],
\end{align*}
where the first inequality uses $\hval_i \leq \T_i + (\hval_i - \T_i)^+$  
and the second inequality uses Claim~\ref{lem:KWProp2Advers} for $S = \Opt(\hbval \mid A)$.  After rearranging, this implies \eqref{eq:magic2}.
\end{proof}
We  need the following Claim~\ref{lem:KWProp2Advers} in the proof of Lemma~\ref{lem:utilAdvProph}.
\begin{claim} \label{lem:KWProp2Advers}
%\snote{What is $\hbval$?} 
For every pair of disjoint sets $A, S$ such that $A \cup S \in \M$, 
\begin{equation}
\alpha \cdot \E_{\hbval \sim \hcalD} \Big[\sum_{i \in S} R(A_{i - 1},\hbval) - R(A_{i - 1} \cup \{ i \},\hbval) \Big] 
= \sum_{i \in S} \T_i \leq \alpha \cdot  \E_{\hbval \sim \hcalD} [R(A,\hbval)]. 
\label{eq:balance2}
\end{equation}
\end{claim}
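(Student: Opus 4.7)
The equality in the claim is immediate from the definition $\T_i = \alpha\cdot\baseprice_i(A_{i-1}) = \alpha\cdot\E_{\hbval\sim\hcalD}[R(A_{i-1},\hbval) - R(A_{i-1}\cup\{i\},\hbval)]$ combined with linearity of expectation; note that $A_{i-1}$ is determined by $\bval$ alone and is independent of the hallucinated $\hbval$, so the expectation passes cleanly through the sum over $i\in S$. All the real work lies in the inequality, which I plan to derive from the following pointwise (in $\hbval$) deterministic matroid statement:
\begin{equation*}
\sum_{i \in S} \bigl[ R(A_{i-1},\hbval) - R(A_{i-1} \cup \{i\},\hbval) \bigr] \;\leq\; R(A,\hbval).
\end{equation*}
Taking expectation over $\hbval\sim\hcalD$ and multiplying by $\alpha$ then yields the claimed bound on $\sum_{i\in S}\T_i$.

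The proof of the pointwise bound uses the standard matroid fact that, for any non-negative weights $\hbval$, the contraction value $B\mapsto R(B,\hbval)$ is submodular on independent sets: whenever $B\subseteq B'$ and $B'\cup\{i\}\in\M$,
\begin{equation*}
R(B,\hbval) - R(B\cup\{i\},\hbval) \;\leq\; R(B',\hbval) - R(B'\cup\{i\},\hbval).
\end{equation*}
This is equivalent to submodularity of $h(B):=R(B,\hbval)+\hbval(B)$, the maximum-weight basis of $\M$ containing $B$, which is a classical consequence of the matroid exchange axiom; I would either cite this fact or give a short self-contained exchange argument.

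Given submodularity, the pointwise inequality follows in two steps. First, because $A_{i-1}\subseteq A$ for every $i$ (every algorithm selection before step $i$ lies in the final set $A$) and $A\cup\{i\}\subseteq A\cup S\in\M$, one application of submodularity gives $R(A_{i-1},\hbval) - R(A_{i-1}\cup\{i\},\hbval) \le R(A,\hbval) - R(A\cup\{i\},\hbval)$ for every $i\in S$. Second, fix any ordering $S=\{s_1,\dots,s_k\}$ and let $S_{<j}=\{s_1,\dots,s_{j-1}\}$; a second application of submodularity (with base sets $A\subseteq A\cup S_{<j}$) followed by telescoping yields
\begin{align*}
\sum_{j=1}^{k} \bigl[ R(A,\hbval)-R(A\cup\{s_j\},\hbval) \bigr]
&\;\leq\; \sum_{j=1}^{k} \bigl[ R(A\cup S_{<j},\hbval)-R(A\cup S_{\le j},\hbval) \bigr] \\
&\;=\; R(A,\hbval) - R(A\cup S,\hbval) \;\leq\; R(A,\hbval),
\end{align*}
where the final inequality uses $R(\cdot,\hbval)\ge 0$. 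Chaining the two bounds establishes the pointwise inequality and hence the claim.

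I expect the main obstacle to be establishing submodularity of $R(\cdot,\hbval)$ in the required form; once that matroid fact is in hand, the rest of the proof is the two-step submodularity-plus-telescoping argument sketched above.
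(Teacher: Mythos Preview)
Your proposal is correct and matches the paper's approach: both establish the inequality pointwise in $\hbval$ and then take expectation, with the paper simply deferring the pointwise statement to~\cite{KW-STOC12} (and pointing to the analogous Claim~\ref{lem:KWProp2}) while you supply the details. Your two-step argument---first using monotonicity of $\Delta_i(B):=R(B,\hbval)-R(B\cup\{i\},\hbval)$ in $B$ to pass from $A_{i-1}$ to $A$, then telescoping over $S$---is exactly the kind of matroid exchange reasoning underlying those citations; the required inequality $R(B)+R(B\cup\{i,k\})\le R(B\cup\{i\})+R(B\cup\{k\})$ follows from the \emph{strong} (symmetric) basis exchange property, so be sure to invoke that form when you write out the ``short self-contained exchange argument.''
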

\begin{proof}
This  directly follows from \cite{KW-STOC12}, as they proved it for every fixed $\hbval$. The proof is similar to Claim~\ref{lem:KWProp2} in the next section.
\end{proof}

\begin{proof}[Proof of Theorem~\ref{thm:ExAntMatrAdvers}]
Using Lemma~\ref{lem:revAdvProph} and Lemma~\ref{lem:utilAdvProph}, and substituting $\alpha=\frac12$, we get
\[	\E[\Alg] = \E[\Util] + \E[\Rev] \geq \frac12 \cdot r(0) = \frac12 \cdot  \sum_{i\in\univ} x_i y_i. \qedhere
\]
\end{proof}
%---------------------------------------------------------------------------
%\subsection{Combinatorial Auctions Polytope}

%%%%%%%%%%%%%%%%%%%%%%%%%%%%%%

%\section{Ex-Ante Prophet Inequality for Random Order}
\subsection{Random Order}

%Let $\val_i $ denote the random value of the $i$'th r.v. (element) and let $\hat{\val_i}$ denote another independent draw from the value distribution of the $i$'th buyer.  The problem is to select a subset $I$ of the buyers that form a feasible set in matroid $\M$, while trying to maximize $\sum_{i\in I} \val_i$.  
 We prove the optimal ex-ante prophet inequality for a matroid for random arrival. 
\begin{theorem} \label{thm:ExAnteProphSecMatr}
For matroids, there exists  a $(1-1/e)$-approximation ex-ante prophet inequality   for uniformly random arrival order.
\end{theorem}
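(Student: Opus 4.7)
My plan is to follow the same template that worked for Theorem~\ref{thm:ExAntMatrAdvers}, but with a time-dependent threshold that exploits the uniformly random arrival. First, by Lemma~\ref{lem:OCRSGenToBern} I reduce to the Bernoulli instance where each $\val_i$ equals $y_i$ with probability $x_i$ and is $0$ otherwise, with target value $\sum_i x_i y_i$. Since the arrival is uniformly random, I can associate with each element $i$ a normalized arrival time $t_i \in [0,1]$ (where $t_i$ is the fraction of elements that have arrived, including $i$ itself). On arrival of $i$, the algorithm accepts $i$ iff $B\cup\{i\}\in\mathcal{I}$ and $\val_i > T_i := \alpha(t_i)\cdot b_i(A_{i-1})$, where $\alpha:[0,1]\to[0,1]$ is a carefully chosen schedule; the natural guess is $\alpha(t)=e^{t-1}$ or $\alpha(t)=1-e^{-(1-t)}$, motivated by the fact that the analogous one-dimensional integrals evaluate to $1-1/e$.

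The algorithm's value decomposes as $\Alg=\Rev+\Util$ with $\Rev=\sum_{i\in A}T_i$ and $\Util=\sum_{i\in A}(\val_i-T_i)^+$. I would lower-bound $\E[\Rev]$ using a residual function $r(t):=\E_{\bval,\hbval}[R(A_{\lfloor tn\rfloor},\hbval)]$ exactly as in Lemma~\ref{lem:revAdvProph}: the telescoping identity yields $\E[\Rev]\approx \int_0^1 \alpha(t)\cdot(-r'(t))\,dt$. For $\E[\Util]$, I would mimic the two-step argument in Lemma~\ref{lem:utilAdvProph}: first use the fact that $\val_i$ and $\hval_i$ are identically distributed (conditional on the arrival realization and earlier decisions) to pass from $\sum_{i\in\univ}(\val_i-T_i)^+$ to $\sum_{i\in\Opt(\hbval\mid A)}(\hval_i-T_i)^+$, and then apply a random-order analog of Claim~\ref{lem:KWProp2Advers} which, after rearranging $\hval_i \le T_i+(\hval_i-T_i)^+$, gives $\E[\Util]\gtrsim \int_0^1(1-\alpha(t))\cdot r(t)\,dt$.

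The main obstacle is the random-order analog of Claim~\ref{lem:KWProp2Advers}. In the adversarial proof this claim was a fixed-$\hbval$ inequality from \cite{KW-STOC12} saying that the sum of base prices over any independent set $S$ extending $A$ is at most $R(A,\hbval)$. For random order I need a quantitatively stronger statement: for each fixed $\hbval$, summing $\alpha(t_i)\cdot(R(A_{i-1},\hbval)-R(A_{i-1}\cup\{i\},\hbval))$ over $i\in \Opt(\hbval\mid A)$ must be at most $(1-\alpha(t))$-weighted integrated over $t$, or equivalently, the time-indexed telescoping of the matroid-exchange argument of \cite{KW-STOC12} must be carried out, using that each $i\in \Opt(\hbval\mid A)$ has its position $t_i$ uniformly distributed in $[0,1]$ (by symmetry of the random order, conditionally on $i\not\in A$). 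This is where the $(1-1/e)$ factor is unlocked, since the fraction of the optimum that is ``still alive'' at time $t$ is, in expectation, $(1-t)$.

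Combining the two bounds gives $\E[\Alg]\ge \int_0^1\!\bigl[\alpha(t)(-r'(t))+(1-\alpha(t))r(t)\bigr]dt$. Choosing $\alpha(t)=1-e^{-(1-t)}$ (so that $\alpha'=1-\alpha$) and integrating by parts makes the right-hand side collapse to $(1-1/e)\cdot r(0)=(1-1/e)\sum_i x_iy_i$, completing the proof. A minor technical subtlety I expect to handle is the discrete-versus-continuous time translation, since the arrival times actually lie in $\{1/n,\dots,1\}$; this can be smoothed out by analyzing a continuous embedding of the process or by an averaging argument that only costs lower-order terms.
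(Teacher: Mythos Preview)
Your overall template --- time-varying threshold $\alpha(t)=1-e^{t-1}$, the $\Rev+\Util$ split, the residual $r(t)$, and the final integration by parts using $1-\alpha+\alpha'=0$ --- is exactly what the paper does (and the continuous-time embedding you mention at the end is in fact used from the start, so the discrete/continuous issue never arises). The revenue identity $\E[\Rev]=-\int_0^1\alpha(t)\,r'(t)\,dt$ is the paper's Claim~\ref{claim:revMat}.

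The gap is in the utility step. Directly mimicking Lemma~\ref{lem:utilAdvProph} lands you at $\Opt(\hbval\mid A)$, the optimum in the matroid contracted by the \emph{final} accepted set $A$; that route yields only a bound of the shape $\E[\Util]\ge r(1)-\E\bigl[\sum_{i\in\Opt(\hbval\mid A)}T_i\bigr]$, and no strengthening of Claim~\ref{lem:KWProp2Advers} turns $r(1)$ into $\int_0^1(1-\alpha(t))\,r(t)\,dt$. The sum you write, $\sum_{i\in\Opt(\hbval\mid A)}\alpha(t_i)\bigl(R(A_{i-1},\hbval)-R(A_{i-1}\cup\{i\},\hbval)\bigr)$, mixes three time scales (final $A$, intermediate $A_{i-1}$, random $t_i$) and does not telescope; the heuristic ``fraction of the optimum still alive at time $t$ is $1-t$'' is not what produces $1-1/e$ here. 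What the paper actually does is, for each fixed $t$, compare to $\Opt(\hbval\mid A_t)$ rather than $\Opt(\hbval\mid A)$. The missing ingredient is a short \emph{decoupling lemma} (Lemma~\ref{lem:withwithouiMatroid}): conditioning on $\arrivaltime_i=t$ can be dropped in the lower bound because if $\arrivaltime_i\ge t$ the state $A_t$ is unchanged, and if $\arrivaltime_i<t$ with $i\in A_t$ the indicator $\one_{i\in\Opt(\hbval\mid A_t)}$ vanishes. After decoupling, one sums over $i$ and integrates over $t$, and at each fixed $t$ applies the \emph{unchanged} Kleinberg--Weinberg inequality $\sum_{i\in S}\baseprice_i(A_t)\le\E_{\hbval}[R(A_t,\hbval)]$ (Claim~\ref{lem:KWProp2}) with $S=\Opt(\hbval\mid A_t)$. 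No quantitatively stronger matroid-exchange statement is needed.
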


%\noindent Note that the approximation in this theorem as well as the following ones compare to the expected \emph{optimal offline solution} for the particular outcomes of the distributions. That is, in the case of matroids, we have $\E[\Alg] \geq (1-1/e) \cdot \E[ \max_{I \in \mathcal{I}} \sum_{i \in I} \val_i ]$, where $\val_i$ is the  value of buyer $i${\footnote{It's not known if $1-1/e$ is tight for MPS.}.

%Starting with $A_0=\emptyset$, let $A_t^{\barrivaltime}$ denote the set of accepted buyers \emph{before} time $t$ when buyers take values $\bval$ and arrive in order $\barrivaltime$. When $\barrivaltime$ is clear from context, we abuse notation and use $A_t$ instead of $A_t^{\barrivaltime}$.
The proof of Theorem~\ref{thm:ExAnteProphSecMatr} is  similar to the matroid prophet secretary inequality in~\cite{EHKS-SODA18}.
We consider the model where each item chooses the arrival time from $[0, 1]$ uniformly and independently, which is equivalent to the random permutation model. 
{Starting with $A_0=\emptyset$, let $A_t$ denote the set of accepted elements by our algorithm \emph{before} time $t$. This is a random variable that depends on the values $\bval$ and arrival times $\barrivaltime$.}
For $t\in [0,1]$, let 
\[\alpha(t) := 1-\exp(t-1) . \]
Suppose an  element $i$ arrives at time $t$, then our algorithm selects $i$ iff both $\val_i > \alpha(t) \cdot \baseprice_i(A_t) $ and selecting $i$ is feasible in $\M$.

\IGNORE{
Thus, the total value of algorithm  $\Alg := \sum_{i\in A} \val_i = \Rev + \Util$, where
\[\Rev := \sum_{i\in A} T_i \qquad \text{and} \qquad \Util := \sum_{i\in A} (v_i-T_i)^+ .
\]
Since $\sum_{i\in\univ} x_i y_i$  is the optimal value of \eqref{eq:exanteUpper}, 
to prove Theorem~\ref{thm:ExAntMatrAdvers} it suffices to show  $\E[\Alg] = \E[\Rev] + \E[\Util] \geq \alpha \cdot \sum_{i\in\univ} x_i y_i$. 
}

Similar to \S\ref{sec:adversMatroidOCRS}, we keep track of the algorithm's progress using the  \emph{residual} function
\[ r(t) := \E_{\bval \sim \calD, \hbval \sim \hcalD,\barrivaltime}[R(A_t,\hbval)],\] 
where $A_t$ is a function of $\bval$ {and $\barrivaltime$}. Clearly, $r(0) = \sum_{i\in \univ} x_i y_i$. 

\begin{claim}\label{claim:revMat}
$\E_{\bval  \sim \calD,\barrivaltime}[\Rev] = - \displaystyle{\int_{t=0}^1} \alpha(t) \cdot r'(t) dt.$
\end{claim}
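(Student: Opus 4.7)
The plan is to express $\Rev$ as an integral over $t \in [0,1]$ and identify the integrand with $-r'(t)$. By the algorithm's threshold rule, for each realization of $(\bval, \barrivaltime)$ we have $\Rev = \sum_{i \in A} \alpha(T_i) \cdot \baseprice_i(A_{T_i})$. Taking expectation and applying linearity,
\[ \E[\Rev] \;=\; \sum_i \E\!\left[\alpha(T_i) \cdot \baseprice_i(A_{T_i}) \cdot \mathbf{1}[i \in A]\right]. \]
Since each $T_i$ is uniform on $[0,1]$ with density $1$ and is independent of $\bval$ and the other arrival times, conditioning on $T_i = t$ and then swapping the (finite) sum with the integral over $t$ gives
\[ \E[\Rev] \;=\; \int_0^1 \alpha(t) \cdot \sum_i \E\!\left[\baseprice_i(A_t) \cdot \mathbf{1}[i \text{ is accepted}] \,\big|\, T_i = t\right] dt. \]

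The heart of the proof is then to show that the bracketed sum equals $-r'(t)$. I will examine the expected drop $r(t) - r(t+h)$ over a small interval $(t, t+h]$. For each realization of $(\bval, \barrivaltime, \hbval)$, the set $A_{t+h}$ is obtained from $A_t$ by adding exactly the elements that arrive in $(t, t+h]$ and are accepted, and each such element $j$ decreases $R(\cdot, \hbval)$ by exactly $\baseprice_j(A_{T_j}, \hbval) = R(A_{T_j}, \hbval) - R(A_{T_j} \cup \{j\}, \hbval)$. Taking expectation over $\hbval$ turns $\baseprice_j(A_{T_j}, \hbval)$ into $\baseprice_j(A_{T_j})$, and then conditioning on $T_j = s$ (density $1$) expresses $r(t) - r(t+h)$ as the integral over $(t, t+h]$ of $\sum_i \E[\baseprice_i(A_s) \mathbf{1}[i \text{ accepted}] \mid T_i = s]$. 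Dividing by $h$ and letting $h \to 0$ identifies the integrand at $s = t$ with $-r'(t)$, which combined with the previous display proves the claim.

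The main technical obstacle is verifying that $r$ is actually differentiable and justifying the various interchanges of expectation, sum, and limit. Fortunately, by Lemma~\ref{lem:OCRSGenToBern} it suffices to treat Bernoulli $\bval$, giving a finite outcome space for $\bval$, while $\barrivaltime$ is a product of independent uniforms on $[0,1]$. For each fixed $\bval$ and each fixed relative order of the $T_i$'s, the probability $\Pr[A_t = S \mid \bval]$ is a polynomial in $t$, so summing over the finitely many subsets $S \subseteq \univ$ and orderings reveals that $r(t)$ is piecewise polynomial, hence continuously differentiable outside a finite set of breakpoints; this smoothness is more than enough to validate the Fubini and limiting arguments above.
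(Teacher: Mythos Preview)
Your proposal is correct and follows essentially the same approach as the paper (which defers to~\cite{EHKS-SODA18}): relate the revenue collected in a small time window $(t,t+h]$ to the telescoping drop $R(A_t,\hbval)-R(A_{t+h},\hbval)$, take expectations, and pass to the limit. The only organizational difference is that you first pull the factor $\alpha(t)$ outside by conditioning on $T_i=t$ and then identify the remaining integrand with $-r'(t)$, whereas the argument in~\cite{EHKS-SODA18} sandwiches the revenue increment between $\alpha(t)$ and $\alpha(t+\epsilon)$ times the drop in $r$; both routes are equivalent. One small caveat on your regularity sketch: the acceptance rule involves $\alpha(s)=1-e^{s-1}$, so for fixed Bernoulli $\bval$ the event $\{i\text{ accepted}\}$ is not purely an order event but also a threshold event $T_i>c$ for some constant depending on $A_{T_i}$; this still yields piecewise-polynomial probabilities in $t$, so your conclusion that $r$ is piecewise $C^1$ (hence the Fubini/limit interchanges are valid) stands.
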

\begin{proof}
This follows directly from the definition of $\Rev$. See~\cite{EHKS-SODA18} for  details.
\end{proof}

\IGNORE{
\begin{proof}
Consider the time from $t$ to $t + \epsilon$ for some $t \in [0,1]$, $\epsilon > 0$. Let us fix the arrival times $\barrivaltime$ and values $\bval$ of all elements. This also fixes the sets $(A_t)_{t \in [0,1]}$. Let $i_1, \ldots, i_k$ be the arrivals between $t$ and $t+\epsilon$ that get accepted in this order. Note that it is also possible that $k=0$. The revenue obtained between $t$ and $t+\epsilon$ is now given as
\begin{align*}
\Rev_{\leq t+\epsilon} - \Rev_{\leq t} & = \sum_{j=1}^k \alpha(t_{i_j}) \baseprice_{i_j}(A_{t_{i_j}})\\
& = \sum_{j=1}^k \alpha(t_{i_j}) \E_{\hbval} \left[ R(A_t \cup \{i_1, \ldots i_{j-1}\}, \hbval) - R(A_t \cup \{i_1, \ldots i_j\}, \hbval) \right] \\
& \geq \alpha(t+\epsilon) \E_{\hbval} \left[ R(A_t, \hbval) - R(A_{t+\epsilon}, \hbval) \right].
\end{align*}
Taking the expectation over $\bval$ and $\barrivaltime$, we get by linearity of expectation
\[
\E_{\bval,\barrivaltime}[\Rev_{\leq t+\epsilon}] - \E_{\bval,\barrivaltime}[\Rev_{\leq t}] \geq \alpha(t+\epsilon) (r(t) - r(t+\epsilon)).
\]
By the same argument, we also have
\[
\E_{\bval,\barrivaltime}[\Rev_{\leq t+\epsilon}] - \E_{\bval,\barrivaltime}[\Rev_{\leq t}] \leq \alpha(t) (r(t) - r(t+\epsilon)) .
\]
In combination, we get that
\[
\frac{d}{dt} \E_{\bval,\barrivaltime}[\Rev_{\leq t}] = - \alpha(t) r'(t) ,
\]
which implies the claim.
\end{proof}
}

\begin{lemma}\label{lem:utilMat}
$\E_{\bval \sim \calD,\barrivaltime}[\Util] \geq  \displaystyle{\int_{t=0}^1}  (1-\alpha(t)) \cdot r(t) dt.$
\end{lemma}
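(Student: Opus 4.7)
The plan is to mirror the two-part argument of Lemma~\ref{lem:utilAdvProph} (the analogs of \eqref{eq:magic1} and \eqref{eq:magic2}), replacing the ``next arrival'' decomposition by an integration over time. First I would extend the sum that defines $\Util$ to all of $\univ$: any element $i \notin A$ is rejected either because $A_{T_i} \cup \{i\} \notin \mathcal{I}$ or because $v_i \leq \alpha(T_i) \baseprice_i(A_{T_i})$, and the second case contributes zero to $(v_i - \alpha(T_i) \baseprice_i(A_{T_i}))^+$, so
\[\Util = \sum_{i \in \univ} (v_i - \alpha(T_i) \baseprice_i(A_{T_i}))^+ \cdot \mathbb{1}[A_{T_i} \cup \{i\} \in \mathcal{I}].\]
Conditioning on $T_i = t$ and using that $T_i$ is uniform on $[0,1]$ gives
\[\E[\Util] = \sum_{i \in \univ} \int_0^1 \E\bigl[(v_i - \alpha(t) \baseprice_i(A_t))^+ \mathbb{1}[A_t \cup \{i\} \in \mathcal{I}] \,\bigm|\, T_i = t\bigr]\, dt.\]

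Next I would perform the $v_i \to \hval_i$ swap. Conditional on $T_i = t$, the set $A_t$ is measurable in $\{(v_j, T_j) : j \neq i,\, T_j < t\}$ and hence independent of $v_i$; since the reduction in~\S\ref{sec:ReducToBernoulliOCRS} lets us assume $v_i$ is Bernoulli with the same marginal as $\hval_i$, and since $\hbval$ is drawn independently of $(\bval, \barrivaltime)$, I can substitute $\hval_i$ for $v_i$ without changing the expectation. Dropping the indicator by restricting the sum to $i \in \Opt(\hbval \mid A_t)$ (every such $i$ satisfies $A_t \cup \{i\} \in \mathcal{I}$) and interchanging sum with integral then yields
\[\E[\Util] \ \geq \ \int_0^1 \E_{\bval, \hbval, \barrivaltime}\Big[\sum_{i \in \Opt(\hbval \mid A_t)} \bigl(\hval_i - \alpha(t)\, \baseprice_i(A_t)\bigr)^+\Big]\, dt.\]

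Finally, the inner expectation is lower bounded pointwise in $(\bval, \hbval, \barrivaltime)$ using the random-order analog of Claim~\ref{lem:KWProp2Advers}: since $A_t \cup \Opt(\hbval \mid A_t) \in \mathcal{I}$, the matroid exchange property yields $\sum_{i \in \Opt(\hbval \mid A_t)} \baseprice_i(A_t) \leq \E_{\hbval' \sim \hcalD}[R(A_t, \hbval')]$. Combining this with the trivial bound $\hval_i \leq \alpha(t) \baseprice_i(A_t) + (\hval_i - \alpha(t) \baseprice_i(A_t))^+$, summing over $i \in \Opt(\hbval \mid A_t)$, and taking expectations gives $r(t) = \E[R(A_t, \hbval)] \leq \alpha(t)\, r(t) + \E[\sum_{i \in \Opt(\hbval \mid A_t)} (\hval_i - \alpha(t) \baseprice_i(A_t))^+]$. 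Rearranging and integrating over $t \in [0,1]$ gives the lemma.

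The main obstacle is the passage from element-by-element conditional expectations (each conditioned on a different event $T_i = t$) to a single unconditional expectation inside the integrand above. Conditioning on $T_i = t$ replaces $A_t$ by its law under ``run the algorithm without $i$ up to time $t$'', which depends on $i$; the argument requires noting that for $i \in \Opt(\hbval \mid A_t)$, which in particular satisfies $i \notin A_t$, this effectively coincides with the unconditional distribution of $A_t$, so the pointwise matroid-exchange bound still produces the clean integrand $(1 - \alpha(t)) r(t)$. The same subtlety appears in~\cite{EHKS-SODA18} and is the step that requires the most care in a fully rigorous write-up.
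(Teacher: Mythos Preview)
Your proposal is correct and follows essentially the same route as the paper: write $\Util$ element-by-element, condition on $T_i=t$, swap $v_i$ for $\hval_i$ using independence of $A_t$ from $v_i$, restrict to $\Opt(\hbval\mid A_t)$, remove the conditioning on $T_i=t$, and finish with the base-price inequality $\sum_{i\in\Opt(\hbval\mid A_t)} b_i(A_t)\le \E_{\hbval'}[R(A_t,\hbval')]$. The paper packages the conditioning-removal step as a separate Lemma~\ref{lem:withwithouiMatroid} and the base-price bound as Claim~\ref{lem:KWProp2}, exactly the two ingredients you isolate.

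The one genuine (minor) difference is the order of operations around the positive part: the paper drops $(\cdot)^+$ \emph{before} removing the conditioning on $T_i=t$ (so Lemma~\ref{lem:withwithouiMatroid} is stated for the signed quantity $(\hval_i-\alpha(t)b_i(A_t))\cdot\one_{i\in\Opt(\hbval\mid A_t)}$), whereas you keep $(\cdot)^+$ through that step. Your ordering is arguably cleaner: with the positive part retained, the integrand is manifestly nonnegative, so the case $T_i<t,\ i\in A_t$ (where the unconditional indicator vanishes) immediately yields the desired inequality, without needing any further sign argument. The trade-off is nil for the final bound, since in both versions the $(\cdot)^+$ is eventually absorbed via $\hval_i\le \alpha(t)b_i(A_t)+(\hval_i-\alpha(t)b_i(A_t))^+$.
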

\begin{proof} The utility for element $i$ arriving at time $t$ is given by 
\begin{align*} \E_{\bval,\barrivaltime}[u_i \mid \arrivaltime_i = t] &= \E_{\bval,\barrivaltime_{-i}} \left[ \left(\val_i - \alpha(t) \cdot \baseprice_i(A_t) \right) ^+ \cdot \one_{i \not \in \Span(A_t)} \growingmid \arrivaltime_i = t \right] .
\end{align*}
Observe that $A_t$ does not depend on $\val_i$ if $\arrivaltime_i = t$ because it includes only the acceptances \emph{before} $t$. It does not depend on $\hat{\val_i}$ either, as $\hat{\val_i}$ is only used for analysis purposes and not known to the algorithm. Since $\val_i$ and $\hat{\val_i}$ are identically distributed, we can also write 
\begin{align}\label{eq:AtXtIndependence}
\E_{\bval \sim \calD,\barrivaltime}[u_i \mid \arrivaltime_i = t] = \E_{\bval \sim \calD,\hbval \sim \hcalD,\barrivaltime_{-i}} \left[ \left(\hval_i - \alpha(t) \cdot \baseprice_i(A_t) \right) ^+ \cdot \one_{i \not \in \Span(A_t)} \growingmid \arrivaltime_i = t \right].
\end{align}
Now observe that element $i$ can belong to $\Opt(\hbval\mid A_t)$ only if it's not already in $\Span(A_t)$, which implies  $\one_{i \not \in \Span(A_t)} \geq \one_{i \in \Opt(\hbval\mid A_t)}$. Using this and removing non-negativity, we get
\[ \E_{\bval,\barrivaltime}[u_i \mid \arrivaltime_i = t] \geq \E_{\bval,\hbval,\barrivaltime_{-i}} \left[ \left(\hval_i - \alpha(t) \cdot \baseprice_i(A_t) \right)  \cdot \one_{i \in \Opt(\hbval\mid A_t)} \growingmid \arrivaltime_i = t \right] .
\]
Now we use Lemma~\ref{lem:withwithouiMatroid}  to remove the conditioning on element $i$ arriving at time $t$ as this gives a valid lower bound on expected utility, 
\begin{align} \label{eq:utilBuyerIMatroid}
 \E_{\bval,\barrivaltime}[u_i \mid \arrivaltime_i = t] \geq \E_{\bval,\hbval,\barrivaltime} \left[\left(\hval_i - \alpha(t) \cdot \baseprice_i(A_t) \right)  \cdot \one_{i\in \Opt(\hbval\mid A_t)} \right].
\end{align}

We can now lower bound sum of all the utilities using  Eq.~\eqref{eq:utilBuyerIMatroid} to get
\begin{align*}
\E_{\bval,\barrivaltime}[\Util] & = \sum_i \int_{t=0}^1 \E_{\bval,\barrivaltime}[u_i \mid \arrivaltime_i = t] \cdot  dt \\
&\geq \sum_i \int_{t=0}^1 \E_{\bval,\hbval \sim \hcalD,\barrivaltime} \left[\left(\hval_i - \alpha(t) \cdot \baseprice_i(A_t) \right)  \cdot \one_{i\in \Opt(\hbval\mid A_t)} \right] \cdot dt.
\end{align*}
%\newpage
By moving the sum over elements inside the integrals, we get
\begin{align*}
\E_{\bval,\barrivaltime}[\Util]  &\geq  \int_{t=0}^1  \E_{\bval,\hbval,\barrivaltime} \Big[ \sum_i \left(\hval_i - \alpha(t) \cdot \baseprice_i(A_t) \right)  \cdot \one_{i\in \Opt(\hbval\mid A_t)} \Big] \cdot dt\\
& =\int_{t=0}^1  \E_{\bval,\hbval,\barrivaltime} \Big[   R(A_t,\hbval) -  \alpha(t) \cdot \sum_{i\in \Opt(\hbval\mid A_t)}  \baseprice_i(A_t)   \Big] \cdot dt .
\end{align*}
Finally, using Claim~\ref{lem:KWProp2} for $S=\Opt(\hbval\mid A_t)$, we get 
\[ \E_{\bval,\barrivaltime}[\Util]  \geq \int_{t=0}^1  \E_{\bval,\hbval,\barrivaltime} \left[ \left(1 - \alpha(t) \right) \cdot R(A_t,\hbval)  \right] \cdot dt. \qedhere
\]
\end{proof}

%\begin{remark}
%We don't use Lemma~\ref{lem:withwithouiMatroid} early because then it destroys independence of $A_t$ and $\val_i$ in Eq.~\eqref{eq:AtXtIndependence}.
%\end{remark}

\begin{proof}[Proof of Theorem~\ref{thm:ExAnteProphSecMatr}]
Using   Lemma~\ref{lem:utilMat} and Claim~\ref{claim:revMat}, we get
\begin{align*}
	\E[\Alg] &= \E[\Rev] + \E[\Util]  \\
	& \geq  - \int_{t=0}^1 \alpha(t) \cdot r'(t) \cdot dt + \int_{t=0}^1  (1-\alpha(t)) \cdot r(t) \cdot dt \\
	&= \int_{t=0}^1 r(t) \cdot (1 - \alpha(t) + \alpha'(t))\cdot dt - [r(t) \cdot \alpha(t)]_{t=0}^{1}.
\end{align*}
Notice that for $\alpha(t) = 1-e^{t-1}$, we have $1 - \alpha(t) + \alpha'(t) = 0$. Hence, we get
\[		\E[\Alg] \geq - [r(t) \cdot \alpha(t)]_{t=0}^{1} = \left( 1- \frac1e \right) \cdot r(0) = \left( 1- \frac1e \right) \cdot \sum_{i \in \univ} x_i y_i. \qedhere
\]
\end{proof}

Finally, we prove the missing Lemma~\ref{lem:withwithouiMatroid} that removes the conditioning on  $i$ arriving at $t$.
\begin{lemma}\label{lem:withwithouiMatroid} For any $i$, any time $t$, and any fixed $\bval,\hbval$,  we have
\begin{align*} 
\E_{\barrivaltime_{-i}} \left[ \left(\hval_i - \alpha(t) \cdot \baseprice_i(A_t) \right) \cdot \one_{i\in \Opt(\hbval\mid A_t)} \mid \arrivaltime_i = t \right]  ~\geq~ \E_{\barrivaltime} \left[\left(\hval_i - \alpha(t) \cdot \baseprice_i(A_t) \right)  \cdot \one_{i\in \Opt(\hbval\mid A_t)} \right].
\end{align*}
\end{lemma}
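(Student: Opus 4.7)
The plan is to condition on the arrival time $\arrivaltime_i$, decompose the RHS based on whether $i \in A_t$, and reduce the inequality to a single residual non-negativity statement that I then handle by a sample-path argument.

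First, since the arrival times are mutually independent uniforms on $[0,1]$, conditioning on $\arrivaltime_i = t$ does not change the distribution of $\barrivaltime_{-i}$, and on this event element $i$ has not yet been processed by time $t$. Hence $A_t$ is a deterministic function of $\barrivaltime_{-i}$ and the fixed $\bval_{-i}$ alone; call this function $A_t^{-i}$, which equivalently is the set produced by running the algorithm on $\univ \setminus \{i\}$ up to time $t$. Setting $f(A) := (\hval_i - \alpha(t)\, b_i(A))\, \one_{i \in \Opt(\hbval\mid A)}$, the LHS equals $\E_{\barrivaltime_{-i}}[f(A_t^{-i})]$.

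For the RHS, I observe that $A_t = A_t^{-i}$ exactly when $i \notin A_t$: either $\arrivaltime_i \geq t$ (so $i$ has not arrived), or $\arrivaltime_i < t$ and the algorithm rejected $i$ upon arrival (and such a rejection has no effect on later decisions). When $i \in A_t$ we have $b_i(A_t) = 0$ since $A_t\cup\{i\}=A_t$, and $\one_{i\in\Opt(\hbval\mid A_t)} = 0$ because $\Opt$ is a subset of $\univ\setminus A_t$; hence $f(A_t)=0$. Letting $|S^+(\barrivaltime_{-i})|$ denote the Lebesgue measure of the times $\tau\in[0,t)$ at which the algorithm would accept $i$ given $\barrivaltime_{-i}$, this gives $\E_{\barrivaltime}[f(A_t)] = \E_{\barrivaltime_{-i}}[f(A_t^{-i})\,(1-|S^+|)]$, so the lemma is equivalent to the residual inequality
\[
\E_{\barrivaltime_{-i}}\!\bigl[\, f(A_t^{-i}) \cdot |S^+(\barrivaltime_{-i})| \,\bigr] \;\geq\; 0.
\]

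The main obstacle will be proving this last inequality. By Lemma~\ref{lem:OCRSGenToBern} I may assume Bernoulli values, so $\val_i, \hval_i \in \{0, y_i\}$; the cases $\val_i = 0$ (which forces $|S^+|=0$) and $\hval_i=0$ (which forces $f(A_t^{-i})=0$) contribute zero, leaving only $\val_i=\hval_i=y_i>0$. If $i\notin\Opt(\hbval\mid A_t^{-i})$---covering in particular the sub-case $\{i\}\cup A_t^{-i}$ dependent---then again $f=0$; otherwise $\{i\}\cup A_\tau^{-i}$ is independent in $\M$ for every $\tau\leq t$. I would then argue a ``calibrated monotonicity": along trajectories where $i$ stays in $\Opt(\hbval\mid A_\tau^{-i})$, the product $\alpha(\tau)\, b_i(A_\tau^{-i})$ is non-increasing in $\tau$. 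Granted this, either $f(A_t^{-i})\geq 0$ directly, or $\hval_i<\alpha(t)\,b_i(A_t^{-i})\leq\alpha(\tau)\,b_i(A_\tau^{-i})$ for every $\tau<t$, ruling out any acceptance and forcing $|S^+|=0$. The hardest step is establishing this monotonicity, which I expect to follow from a matroid-exchange argument exploiting that no element of $A_t^{-i}\setminus A_\tau^{-i}$ can displace $i$ from the optimum while simultaneously inflating $b_i$.
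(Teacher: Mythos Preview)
Your first two paragraphs are correct and coincide exactly with the paper's argument: fix $\barrivaltime_{-i}$, observe $A_t=A_t^{-i}$ whenever $i\notin A_t$ (covering both $\arrivaltime_i\ge t$ and $\arrivaltime_i<t$ with $i$ rejected), and note the RHS integrand vanishes when $i\in A_t$ since $\Opt(\hbval\mid A_t)\subseteq\univ\setminus A_t$. The paper's proof in fact stops right there, simply declaring the case $i\in A_t$ ``trivially true.'' You are more careful and correctly isolate what remains, namely $|S^+(\barrivaltime_{-i})|\cdot f(A_t^{-i})\ge 0$; your disposal of the degenerate Bernoulli cases is also fine (granted the natural convention that zero-weight elements are not placed in $\Opt$).

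The gap is the ``calibrated monotonicity'' on which everything then rests: the claim that $\tau\mapsto\alpha(\tau)\,\baseprice_i(A_\tau^{-i})$ is non-increasing is false. While $\alpha$ decreases, $\baseprice_i(A)$ can \emph{increase} when $A$ grows. In the graphic matroid on a triangle $\{e_1,e_2,e_3\}$ with weights $\hbval'=(1,2,3)$ one has $\baseprice_{e_1}(\emptyset,\hbval')=5-3=2$ but $\baseprice_{e_1}(\{e_2\},\hbval')=3-0=3$, so accepting $e_2$ at any time $\tau_0$ makes $\alpha(\tau)\,\baseprice_{e_1}$ jump up by a factor $3/2$ across $\tau_0$; taking expectations over $\hbval'$ does not reverse this. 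Your side condition that $i$ stay in $\Opt(\hbval\mid A_\tau^{-i})$ cannot help here, because $\baseprice_i(A_\tau^{-i})=\E_{\hbval'}[\cdot]$ averages over a \emph{fresh} sample $\hbval'$ that is independent of the fixed $\hbval$ governing that indicator; the sketched exchange argument has no purchase on this quantity. A clean fix is to keep the positive part and prove the lemma for $(\hval_i-\alpha(t)\,\baseprice_i(A_t))^+\,\one_{i\in\Opt(\hbval\mid A_t)}$ instead: then $f\ge 0$ pointwise so your residual inequality is immediate, and the positive part can be dropped \emph{after} the conditioning is removed without affecting the remainder of the proof of Lemma~\ref{lem:utilMat}.
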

\begin{proof}
We prove the lemma for any fixed $\barrivaltime_{-i}$. Suppose we draw a uniformly random $\arrivaltime_i\in [0,1]$. Observe that  if $\arrivaltime_i \geq t$ then we have equality in the above equation because set $A_t$  is the same both with and without $i$. This is also the case when $\arrivaltime_i<t$ but $i$ is not selected into $A_t$. Finally, when $\arrivaltime_i<t$ and $i \in A_t$  we have $\one_{i\in \Opt(\hbval\mid A_t)} =0$ in the presence of element $i$ (i.e., RHS of lemma), making the inequality trivially true.
%$\hval_i - \alpha(t) \cdot \baseprice_i(A_t, \hbval)  \leq 0$ and both  sides of the inequality are zero, or $ \hval_i - \alpha(t) \cdot \baseprice_i(A_t, \hbval)  > 0$ that implies $i\in \Span(A_t)$ and the RHS is zero.  Hence the inequality is  true in every case.
\end{proof}

\begin{claim}\label{lem:KWProp2} For any fixed $\bval,\barrivaltime$, time $t$, and set of elements $S \subseteq \univ$ that is independent in the matroid $\M/A_t$, we have
\[
\sum_{i\in S}  \baseprice_i(A_t)  \leq  \E_{\hbval} \left[ R(A_t,\hbval)\right] .
\]
\end{claim}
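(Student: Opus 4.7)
The plan is to prove the claim pointwise in $\hbval$: once I establish
\[
\sum_{i \in S} \baseprice_i(A_t, \hbval) \;\leq\; R(A_t, \hbval)
\]
for every realization, taking expectation over $\hbval \sim \hcalD$ and applying linearity together with $\baseprice_i(A_t) = \E_{\hbval}[\baseprice_i(A_t, \hbval)]$ finishes the proof. So I fix an arbitrary $\hbval$ throughout.

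Write $O := \Opt(\hbval \mid A_t)$, the max-weight independent set in $\M/A_t$. Since $\hval_i \geq 0$, without loss of generality $O$ is a \emph{basis} of the contracted matroid $\M/A_t$ (pad with zero-value elements if necessary). I would also extend $S$ to a basis $B$ of $\M/A_t$ (possible by independence augmentation, since $S$ is independent in $\M/A_t$). The key step is to invoke Brualdi's symmetric basis exchange theorem on the two bases $B$ and $O$: this yields a bijection $\tau: B \to O$ such that $\tau(i) = i$ for $i \in B \cap O$, and $(O \setminus \{\tau(i)\}) \cup \{i\}$ is a basis of $\M/A_t$ for every $i \in B$. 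Restricting $\tau$ to $S \subseteq B$ produces an injection $\sigma := \tau|_S : S \hookrightarrow O$.

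The next step is to upper bound $\baseprice_i(A_t, \hbval) = R(A_t, \hbval) - R(A_t \cup \{i\}, \hbval)$ by $\hval_{\sigma(i)}$ for each $i \in S$. For $i \in S \cap O$ we have $\sigma(i) = i$, and $O \setminus \{i\}$ is independent in $\M/(A_t \cup \{i\})$, so $R(A_t \cup \{i\}, \hbval) \geq R(A_t, \hbval) - \hval_i$. For $i \in S \setminus O$, the exchange property says $(O \setminus \{\sigma(i)\}) \cup \{i\}$ is independent in $\M/A_t$, hence $O \setminus \{\sigma(i)\}$ is independent in $\M/(A_t \cup \{i\})$, giving $R(A_t \cup \{i\}, \hbval) \geq R(A_t, \hbval) - \hval_{\sigma(i)}$. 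In either case $\baseprice_i(A_t, \hbval) \leq \hval_{\sigma(i)}$. Summing over $i \in S$ and using that $\sigma$ is an injection into $O$,
\[
\sum_{i \in S} \baseprice_i(A_t, \hbval) \;\leq\; \sum_{i \in S} \hval_{\sigma(i)} \;\leq\; \sum_{j \in O} \hval_j \;=\; R(A_t, \hbval),
\]
as desired.

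The main subtlety is producing $\sigma$ with the right properties simultaneously: we need both (i) injectivity and (ii) the basis-exchange property $(O \setminus \{\sigma(i)\}) \cup \{i\}$ being independent in $\M/A_t$, valid in parallel across all $i \in S$. A naive choice of $\sigma(i)$ as any element of the fundamental circuit of $i$ with respect to $O$ secures (ii) one $i$ at a time but can collide on the same $\sigma(i)$ and fail (i); Brualdi's theorem is precisely the tool that enforces (i) and (ii) at once, so this is the step I expect to need the most care to invoke cleanly.
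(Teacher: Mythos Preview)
Your proof is correct and follows the same high-level approach as the paper: reduce to a pointwise inequality for each fixed $\hbval$, then use a matroid exchange argument to bound each $\baseprice_i(A_t,\hbval)$ by the value of a distinct element of $\Opt(\hbval\mid A_t)$. The only difference is presentational: the paper delegates the pointwise step to external references (Lemma~3.2 of Lucier--Borodin and Proposition~2 of Kleinberg--Weinberg, which themselves rest on exactly the kind of bijective basis-exchange you spell out), whereas you give a self-contained argument via Brualdi's theorem.
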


\begin{proof}
By definition
\[
\textstyle{\sum_{i\in S}  \baseprice_i(A_t) = \E_{\hbval} \left[\sum_{i\in S} \left( R(A_t ,\hbval)-R(A_t \cup \{ i \},\hbval)  \right) \right]. }
\]
Fix the values $\hbval$ arbitrarily, we also have
\[
\sum_{i\in S} \left( R(A_t ,\hbval)-R(A_t \cup \{ i \},\hbval) \right) \leq R(A_t,\hbval).
\]
This follows from the fact that $R(A_t ,\hbval)-R(A_t \cup \{ i \},\hbval)$ are the respective critical values of the greedy algorithm on $\M/A_t$ with values $\hbval$. Therefore, the bound follows from Lemma~3.2 in~\cite{LucierBorodin-SODA10}. An alternative proof is given as Proposition~2 in \cite{KW-STOC12} while in our case the first inequality can be skipped and the remaining steps can be followed replacing $A$ by $A_t$.

Taking the expectation over $\hbval$, the claim follows.
\end{proof}

\appendix
\section{Illustrative Examples}
%\subsection{Matroid Polytope}
%--------------------------------------------------------
\subsection{A $1/2$-OCRS for Rank $1$ Matroids} \label{sec:OCRSProofsSingleItem}

Given $\x \in [0,1]^n$ satisfying $\sum_i x_i \leq 1$, in this section we present the proof of $1/2$-selectable OCRS due to Alaei~\cite{Alaei-SICOMP14} for completeness. (He called it the Magician's problem.)

\begin{theorem}[Alaei~\cite{Alaei-SICOMP14}]
There exists a $1/2$-OCRS for a rank $1$ matroid.
\end{theorem}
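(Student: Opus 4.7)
The plan is to define an adaptive acceptance rule that \emph{aims} to select each element with probability exactly $x_i/2$, rather than greedily accepting whenever possible. I would process the elements in the (known) adversarial arrival order $1,2,\ldots,n$. Let $p_i$ denote the probability that element $i$ is ultimately selected by the algorithm, and let $S_i := 1 - \sum_{j < i} p_j$ denote the probability that no element is selected before step $i$ (so $S_1 = 1$). When element $i$ arrives and is active (which happens independently with probability $x_i$), the algorithm accepts it with probability $q_i := \min\{1,\, 1/(2 S_i)\}$, using fresh independent randomness; otherwise it rejects. Because acceptance requires activity, an empty slot, \emph{and} the coin to come up, we have $p_i = x_i \cdot S_i \cdot q_i$.

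The main step is to prove the invariant $S_i \geq 1/2$ for every $i$, which I would establish by induction on $i$. The base case $S_1 = 1$ is immediate. For the inductive step, the hypothesis $S_j \geq 1/2$ for all $j \leq i$ forces $q_j = 1/(2 S_j)$ and hence $p_j = x_j/2$, so
\[
S_{i+1} \;=\; 1 - \sum_{j \leq i} p_j \;=\; 1 - \frac{1}{2}\sum_{j \leq i} x_j \;\geq\; 1 - \frac{1}{2}\sum_{j=1}^n x_j \;\geq\; \frac{1}{2},
\]
where the last inequality uses $\x \in P_{\calF}$, i.e., $\sum_j x_j \leq 1$. This closes the induction and yields $p_i = x_i/2$ for every $i$, which is precisely $1/2$-selectability.

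The only subtle point — and what distinguishes this construction from the naive $1/4$-OCRS sketched in the introduction — is the decision to \emph{cap} acceptance at $q_i = 1/(2 S_i)$ rather than always accept when feasible. Being more eager would drain $S_i$ too quickly and break the invariant for later elements; being exactly half as eager as possible preserves enough free slot mass to make the induction self-sustaining. I do not anticipate any serious obstacle beyond this bookkeeping: once the algorithm is set up to target $p_i = x_i/2$, the constraint $\sum_i x_i \leq 1$ coming from the rank-$1$ matroid polytope does the rest of the work.
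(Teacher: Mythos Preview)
Your proof is correct and is essentially the same as the paper's: your $S_i$ is exactly the paper's $r_i$, your acceptance probability $q_i=1/(2S_i)$ matches the paper's choice $q_i=\alpha/r_i$ with $\alpha=1/2$, and your inductive bound $S_{i+1}=1-\tfrac12\sum_{j\le i}x_j\ge 1/2$ is the summed form of the paper's recurrence $r_{i+1}=r_i-\alpha x_i$. The only cosmetic difference is that you compute $S_i$ additively via mutual exclusivity of the selection events while the paper writes the same quantity multiplicatively as $r_{i+1}=r_i(1-q_ix_i)$.
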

Since the algorithm  selects at most $1$ element, the only decision it  makes is whether to accept the next element $i$ if it is active. The main idea is to ignore (i.e., not consider) $i$ with  maximum probability, while satisfying that on average it is considered at least $\alpha$ fraction of times for some fixed $\alpha$  (we later set $\alpha=1/2$). Thus on reaching $i$, the algorithm selects $i$ iff it is both considered and is active.
\begin{proof}
Relabel the elements s.t. the arrival order is $1,2,\ldots, n$.
For $i\in [n]$, let $r_i$ denote the probability that the algorithm \emph{reaches} $i$, i.e., it has not selected any of the elements $[n-1]$. Thus, $r_1=1$ and we want $r_n = \alpha$. Let $q_i$ denote the probability that $i$ is considered, conditioned on the event that  algorithm reaches $i$. The algorithm sets $q_{i}$ s.t.  it considers each element w.p. $\alpha$, i.e.,
\begin{align} \label{eq:setQi}
 r_i \cdot q_i =\alpha.
\end{align}
Since on reaching $i$ the algorithm accepts it only when it is both considered and  active,  %we get
\[	r_{i+1} = r_i \cdot (1 - q_i x_i).
\]
Now using \eqref{eq:setQi}, this gives
\[ r_{i+1} = r_i  - \alpha x_i.
\]
Summing over all $i$ and using $r_1=1$, we get 
\[ \textstyle{ r_n  = r_1 - \alpha \sum_i x_i \geq  1 - \alpha. } \]
Finally, using $r_n=\alpha$, we get $\alpha \geq 1/2$.
\end{proof}

%--------------------------------------------------------
\subsection{A $(1-1/e)$-RCRS for Rank $1$ Matroids} \label{sec:RCRSProofsSingleItem}

Given $\x \in [0,1]^n$ satisfying $\sum_i x_i \leq 1$, in this section we give a simple $(1-1/e)$-selectable RCRS. As a corollary, this gives an alternate proof of the $(1-1/e)$-prophet secretary for single item due to Esfandiari et al.~\cite{EHLM-SIDMA17}.

We first notice that the random order can be emulated  by assuming each element $i$ selects a random time $t_i$ to arrive uniformly at random in the interval $[0,1]$. 
\begin{theorem}
An algorithm that selects an active  element $i$ arriving at time $t \in [0,1]$ with probability $\exp(-t \cdot x_i)$ (and ignores $i$ otherwise)  is $(1-1/e)$-selectable for a rank $1$ matroid; that is, on average this algorithm \emph{considers} (not ignore)  any element $i$ at least $(1-1/e)$ fraction of the times. 
\end{theorem}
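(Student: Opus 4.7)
The plan is to exploit the specific form $e^{-t x_i}$ of the coin bias in order to \emph{decouple} the algorithm's behavior across elements. For each $j \in \univ$, define a \emph{nomination time}
\[
N_j := \begin{cases} t_j & \text{if } j \text{ is active and the coin for } j \text{ succeeds,} \\ \infty & \text{otherwise.} \end{cases}
\]
Because activity, arrival times $t_j \sim \mathrm{Unif}[0,1]$, and coin flips are mutually independent across $j$, the $N_j$'s are mutually independent. A direct calculation gives the density $f_{N_j}(s) = x_j e^{-s x_j}$ on $[0,1]$ and hence the tail $\Pr[N_j > s] = e^{-s x_j}$ for every $s \in [0,1]$.

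Next, I would observe that for a rank-$1$ matroid, $i$ is in the algorithm's output if and only if $i$ is the \emph{first} nominated element, i.e.\ $N_i = \min_{j \in \univ} N_j < \infty$. Indeed, the algorithm accepts the first nomination (at that moment no prior acceptance has occurred, so feasibility holds) and from then on rejects every subsequent arrival by feasibility of the rank-$1$ constraint. Using the density of $N_i$ together with independence and the tail formula for $N_j$ with $j \neq i$, I get
\[
\Pr[i \in \pi(\Rx)] = \int_0^1 x_i e^{-s x_i} \prod_{j \neq i} e^{-s x_j}\, ds = x_i \int_0^1 e^{-sX}\, ds = x_i \cdot \frac{1 - e^{-X}}{X},
\]
where $X := \sum_{j \in \univ} x_j$.

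Finally, feasibility of $\x$ for the rank-$1$ matroid gives $X \leq 1$. The function $X \mapsto (1 - e^{-X})/X$ is decreasing on $(0, \infty)$, which follows from the standard inequality $1 + X \leq e^{X}$ (so that $(X+1)e^{-X} \leq 1$, which is precisely the sign condition on the derivative), so its infimum over $X \in (0,1]$ is attained at $X = 1$ with value $1 - 1/e$. Therefore $\Pr[i \in \pi(\Rx)] \geq (1 - 1/e)\, x_i$, which is exactly the $(1-1/e)$-selectability condition. I do not expect a serious obstacle: the main conceptual step is the decoupling into independent nomination times, and once that is in place the argument reduces to a one-line integration and a monotonicity check. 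It is also worth noting that the construction is tight, as the choice $x_j = 1/n$ for all $j$ gives $X = 1$ and matches the $(1-1/e)$ upper bound discussed earlier in the paper.
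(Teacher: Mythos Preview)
Your proof is correct and follows essentially the same route as the paper: both arguments boil down to the integral $\int_0^1 e^{-sX}\,ds$ with $X=\sum_j x_j\le 1$, obtained by multiplying the factors $e^{-s x_j}$ over all $j$. Your repackaging via independent \emph{nomination times} $N_j$ is a clean way to make the product structure explicit (the paper argues the same product formula for $\Pr[\text{reach }t]$ somewhat more informally), and your monotonicity step on $(1-e^{-X})/X$ replaces the paper's pointwise bound $e^{-tX}\ge e^{-t}$; these are cosmetic differences rather than a genuinely different approach.
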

\begin{proof} By reaching time $t$ (element $j$), let us denote the event that no element is selected before time $t$ (element $j$'s arrival). We start by noticing that for any element $i \in [n]$,
\begin{align}
 \Pr[\text{$i$  is considered}]  &= \int_{t=0}^1  \Pr[\text{$i$ is considered at time $t$ $\mid$ reach time $t$ \& $i$ arrives at $t$ }]  \notag \\
&\qquad \qquad \cdot~ \Pr[\text{reach time $t $ $\mid$  $i$ arrives at $t$}] \cdot dt \notag \\
& = \int_{t=0}^1  \exp(-t\cdot x_i) \cdot \Pr[\text{reach time $t $ $\mid$  $i$ arrives at $t$}] \cdot dt. \label{eq:iIsConsidered}
\end{align}
%where the inequality uses $\Pr[\text{reach time $t $$\mid$  $i$ arrives at $t$}] \geq \Pr[\text{reach time $t $}]$.
Now we can simplify 
\begin{align*}
 &\Pr[\text{reach time $t $ $\mid$  $i$ arrives at $t$}]   \\
 &= \prod_{j\neq i} \Big( 1 - \Pr[\text{$j$ arrives before $t$ \& is active \& is considered} \mid \text{reach $j$}] \Big) \\
 & = \prod_{j\neq i} \Big( 1- x_j \cdot \Pr[\text{$j$ arrives before  $t$ \& is considered$\mid$ reach  $j$}]  \Big)\\
 & = \prod_{j\neq i} \Big( 1- x_j \cdot \int_{a=0}^t \exp(-a \cdot x_j) \cdot da    \Big) \quad = \quad \prod_{j\neq i}  \exp(-t \cdot x_j).
\end{align*}
Now combining this equation with \eqref{eq:iIsConsidered}, we get 
\begin{align*}
 \Pr[\text{$i$  is considered}] &= \int_{t=0}^1  \exp(-t\cdot x_i) \cdot\prod_{j\neq i} \exp(-t \cdot x_j)  \cdot dt  \\
 &\geq \int_{t=0}^1 \exp(-t) \cdot dt  =  1-\frac1e, 
\end{align*}
where the  inequality uses $\sum_i x_i \leq 1$.
\end{proof}

\medskip
\noindent
{\bf Acknowledgments}.
We are thankful to Ravishankar Krishnaswamy and Deeparnab Chakrabarty for useful discussions in  early part of this project. Part of this work was done while the authors were visiting the Simons Institute for the Theory of Computing. The second author was supported in part by NSF awards CCF-1319811, CCF-1536002, and CCF-1617790.

%%%%%%%%%%%%%%%%%%%%%%%%%%%%%%

\bibliographystyle{alpha}
\bibliography{bib}

\end{document}